\newcommand{\sgn}{\mathrm{sgn}\,}
\newcommand{\eps}{\varepsilon}
\newcommand{\OO}{\mathcal{O}}
\newtheorem{claim}{Claim}[section]
\newtheorem{thm}[claim]{Theorem}
\newtheorem{prop}[claim]{Proposition}
\newtheorem{cor}[claim]{Corollary}
\begin{document}

\title[Spectra of magnetic chain graphs]
{Spectra of magnetic chain graphs: coupling constant perturbations}

\author{Pavel Exner}
\address{Doppler Institute for Mathematical Physics and Applied
Mathematics, \\ Czech Technical University in Prague,
B\v{r}ehov\'{a} 7, 11519 Prague, \\ and  Nuclear Physics Institute
ASCR, 25068 \v{R}e\v{z} near Prague, Czechia} \ead{exner@ujf.cas.cz}

\author{Stepan S. Manko}
\address{Department of Physics, Faculty of Nuclear Science and Physical Engineering, \\ Czech Technical University in Prague,
Pohrani\v{c}n\'{i} 1288/1,  40501 D\v{e}\v{c}\'{i}n,
Czechia\footnote{On leave of absence from Pidstryhach Institute for Applied Problems of Mechanics and Mathematics, National Academy of Sciences of Ukraine, 3b Naukova str, 79060 Lviv, Ukraine}} \ead{stepan.manko@gmail.com}

\begin{abstract}
We analyze spectral properties of a quantum graph in the form of a ring chain with a $\delta$ coupling in the vertices exposed to a homogeneous magnetic field perpendicular to the graph plane. We find the band spectrum in the case when the chain exhibits a translational symmetry and study the discrete spectrum in the gaps resulting from changing a finite number of vertex coupling constants. In particular, we discuss in details some examples such as perturbations of one or two vertices, weak perturbation asymptotics, and a pair of distant perturbations.
\end{abstract}

\maketitle


\section{Introduction}

Quantum graphs keep attracting a lot of attention, both as simple but effective models of numerous artificially prepared microscopic structures in which transport is dominantly ballistic, as well as a source of interesting questions in quantum theory. We refer to the recent monograph \cite{BK13} which can give the reader a picture of the breath and richness of the subject. In this paper we focus on one class of such graphs, namely those with a \emph{chain structure}. They are characterized by a mixed dimensionality being periodic in one direction, thus giving typically rise to band-and-gap spectra, the structure one which depends on the `transverse' shape, see e.g. \cite{EKW10}. Moreover, it has been observed recently that manipulating the vertex coupling one can control the transport in such chains in an intriguing way \cite{CP14}.

A question of a particular interest, both theoretically and from the practical point of view, concerns influence of magnetic field on quantum dynamics on such graph. Our aim in this paper is to study the simplest chain consisting of a single array of rings in presence of a homogeneous magnetic field; in contrast to \cite{CP14} we consider the simplest nontrivial vertex coupling. Our main concern will be local perturbations of periodicity and the discrete spectrum they cause. A similar problem in the nonmagnetic case was considered in \cite{DET08} where the influence of a simple geometric perturbations, `bending' of the chain, was analyzed. Here we focus on perturbations coming from coupling constant variations. It is not the only possibility, of course, in a sequel to this paper we are going to discuss the effect of local field modifications.

Let us now describe the model and the questions to be addressed in more details. We are going to consider a chain graph $\Gamma$ consisting of an array of rings of unit radius --- cf. Fig.~\ref{fig:graph} --- connected through their touching points. The graph $\Gamma$ is naturally parametrized by two copies of the real line $\mathbb{R}$ corresponding to the upper and lower semicircles, respectively. The state Hilbert space of a nonrelativistic and spinless charged particle confined to $\Gamma$ is $L^2(\Gamma)$. As indicated, we suppose that the particle is moving under the influence of a homogeneous magnetic field perpendicular to the graph plane. Since values of the physical constants are not important in our considerations we put $\hslash=2m=e=1$, where $e$ is the particle charge, and identify the particle Hamiltonian with the magnetic Laplacian acting as $\psi_j\mapsto-\mathcal{D}^2\psi_j$ on each graph link (the definition of the  quasiderivative $\mathcal{D}$ will be given in the next section). The domain of this operator consists of all functions from the Sobolev space $H^2_\mathrm{loc}(\Gamma)$ satisfying the $\delta$-coupling at the graph vertices which are characterized by conditions
\begin{equation}
\label{cond:deltaCoupling1}
\psi_i(0) = \psi_j(0) = :\psi(0)\,,
\quad
i,j\in\mathfrak{n}\,,
\qquad
\sum_{i=1}^n
\mathcal{D}\psi_i(0)
=\alpha\,\psi(0)\,,
\end{equation}
where $\mathfrak{n}=\{1,2,\ldots,n\}$ is the index set numbering the edges emanating from the vertex --- in our case $n=4$ --- and $\alpha\in\mathbb{R}\cup\{\infty\}$ is the coupling constant possibly different at different vertices of the chain. Note the vertex coupling (\ref{cond:deltaCoupling1}) is a particular --- and simplest nontrivial --- case of the general conditions that make the graph magnetic Laplacian self-adjoint \cite{KS03}.

In what follows we denote by $\balpha=\{\alpha_j\}_{j\in\mathbb{Z}}$ the set of coupling constants and by $-\Delta_{\balpha}$ the above described Hamiltonian. In accordance with our stated goal, we start the discussion of spectral properties of this operator in Sec.~\ref{sec:period} with the periodic case, i.e. the situation when all the $\alpha_j$ in $\balpha$ are equal to fixed $\alpha\in\mathbb{R}$. It is not difficult to perform the Bloch-Floquet analysis showing that in this case the spectrum of $\sigma(-\Delta_{\balpha})$ has a band-and-gap structure.

As we have indicated, our main goal is to analyze coupling constant perturbations, that is, to look what happens if the `constant' sequence $\{\ldots,\alpha,\alpha,\ldots\}$ is modified to
\begin{eqnarray*}
\alpha_j &= \alpha+\gamma_j\,,\qquad &j\in\mathbb{M}:=\{1,\ldots,m\}\,,
\\
\alpha_j &= \alpha\,,\qquad &j\in\mathbb{Z}\setminus\mathbb{M}\,.
\end{eqnarray*}
The method we employ is based on translating the spectral problem for the differential equation in question into suitable difference equations. This trick is well known in quantum graph theory \cite{Ca97, Ex97, Pa13}, however, it is assumed usually that not more than one edge connects any two graph vertices. Since this is not the case here, we have to work out the procedure for our purpose; this will be done in Sec.~\ref{sec:duality}. Using it we shall analyze in Sec.~\ref{sec:mcoupl} the discrete spectrum coming from the perturbation both generally and in examples where one or two vertices are perturbed.

The last two sections are devoted to particular cases. One is motivated by comparison to the usual theory of Schr\"{o}dinger operators; we ask about sufficient conditions on the perturbation $\gamma_j$ to assure existence of an eigenvalue of the perturbed operator in the first spectral gap (under the threshold of the continuous spectrum) of the unperturbed system. In Sec.~\ref{sec:wcoupl} we will show that such a condition has the form
\begin{equation}\label{cond:eigValExist}
\sum_{j\in\mathbb{M}}\gamma_j < 0\,,
\end{equation}
which is an analogue of the well-known necessary and sufficient condition $\int_\mathbb{R} V(x)\,\rmd x\le 0$ valid for 1D Schr\"{o}dinger operators $-\rmd^2/\rmd x^2+V(x)$ with potentials $V$ satisfying certain integral-form decay requirements \cite{Si76}. If the perturbation is sufficiently weak, there is precisely one eigenvalue below the continuous spectrum; we derive the corresponding asymptotic expansion in our model. Finally, in Sec.~\ref{sec:dpert} we consider two distant perturbations and show that their mutual influence decays exponentially with their distance in analogy with the Agmon metric effects \cite{Ag82} in the usual theory of Schr\"odinger operators.


\section{The spectrum of the periodic system}
\label{sec:period}

\begin{figure}[h!]
\centering
    \includegraphics[scale=0.9]{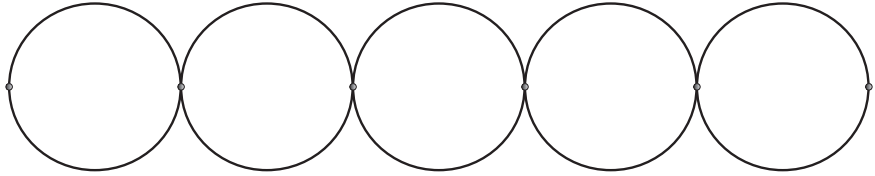}
\caption{The chain graph $\Gamma$}\label{fig:graph}
\end{figure}

Let us first consider the ring chain $\Gamma$ as sketched in Fig.~\ref{fig:graph}; without loss of generality we may and will suppose that the circumference of each ring is $2\pi$. We suppose that the particle is moving in the magnetic field generated by the vector potential $\bf A$. The field is assumed to be perpendicular to the graph plane and homogeneous\footnote{In fact, however, the only quantity of importance will be the magnetic flux through the rings, hence in general the field must be just invariant with respect to discrete shifts along the chain.}. The corresponding vector potential can be thus chosen tangential to each ring and constant; since the coordinates we use to parametrize $\Gamma$ refer to different orientations in the upper and lower part of the chain, respectively, we choose $-A$ as the potential value on the upper halfcircles and $A$ on the lower ones.

We denote by $-\Delta_\alpha$ the particle Hamiltonian which acts as $(-\rmi\nabla-{\bf A})^2$ on each graph link, has the domain consisting of all functions from $H^2_\mathrm{loc}(\Gamma)$ which satisfy the boundary conditions (\ref{cond:deltaCoupling1}) at the vertices of $\Gamma$ with the quasiderivatives being equal to the sum of the derivative and the function value multiplied by the tangential component of $e{\bf A}$ as usual \cite{KS03}. In the present case, however, we have two pairs of vectors of opposite orientation so their contributions cancel and the left-hand side of (\ref{cond:deltaCoupling1}) is in fact nothing else than the sum of the derivatives taking into account different coordinate orientations. In this section we suppose that the coupling constant $\alpha$ is the same at each vertex, and we are going to determine the band-and-gap structure of the spectrum.

\begin{figure}[h!]
\centering
    \includegraphics[scale=1.2]{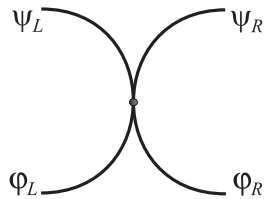}
\caption{Elementary cell of the periodic system}\label{fig:cell}
\end{figure}

In view of the periodicity of $\Gamma$ and $-\Delta_\alpha$ with respect to the discrete shifts, the spectrum can be computed using Bloch-Floquet decomposition \cite[Sec.~4.2]{BK13}. Let us consider an elementary cell with the wave function components denoted according to Fig.~\ref{fig:cell} and look for the spectrum of the Floquet components of $-\Delta_\alpha$. Since the operator acts as $-\mathcal{D}^2:=(\rmi\frac{\rmd}{\rmd x}- A)^2$ on the upper halfcircles and as $-\mathcal{D}^2:=(\rmi\frac{\rmd}{\rmd x}+ A)^2$ on the lower ones, each component of the eigenfunction with energy $E:= k^2\neq0$ is a linear combination of the functions $\rme^{-\rmi Ax}\rme^{\pm\rmi kx}$ on the upper graph links and of $\rme^{\rmi Ax}\rme^{\pm\rmi kx}$ on the lower ones. In what follows we conventionally employ the principal branch of the square root, namely, the momentum $k$ should be positive for $E>0$, while for $E$ negative we put $k=\rmi\kappa$ with $\kappa>0$. For a given $E\neq0$, the wave function components on the elementary cell are therefore given by
\begin{eqnarray}
\psi_L(x)&=\rme^{-\rmi Ax}
(
C_L^+\rme^{\rmi k  x}+C_L^-\rme^{-\rmi k  x}
)\,,\qquad
&x\in[-\pi/2,0]\,,
\nonumber\\
\psi_R(x)&=
\rme^{-\rmi Ax}
(
C_R^+\rme^{\rmi k  x}+C_R^-\rme^{-\rmi k  x}
)\,,
&x\in[0,\pi/2]\,,
\nonumber\\
[-.8em] \label{wavefunct}
\\
[-.8em]
\varphi_L(x)&=\rme^{\rmi Ax}
(D_L^+\rme^{\rmi k  x}+D_L^-\rme^{-\rmi k  x})\,,
&x\in[-\pi/2,0]\,,
\nonumber\\
\varphi_R(x)&=
\rme^{\rmi Ax}
(
D_R^+\rme^{\rmi k  x}+D_R^-\rme^{-\rmi k  x}
)\,,
&x\in[0,\pi/2]\,.\nonumber
\end{eqnarray}
As we have said, at the contact point the $\delta$-coupling is assumed, i.e. we have
\begin{eqnarray}
\nonumber
\phantom{-}
\psi_L(0)=\psi_R(0)=\varphi_L(0)=\varphi_R(0)\,,
\\
[-.8em] \label{deltacoupl}
\\
[-.8em]
-\mathcal{D}\psi_L(0)+\mathcal{D}\psi_R(0)-\mathcal{D}\varphi_L(0)
+\mathcal{D}\varphi_R(0)=\alpha\psi_L(0)\,.
\nonumber
\end{eqnarray}
On the other hand, at the `free' ends of the cell the Floquet conditions are imposed,
\begin{eqnarray}
\nonumber
\psi_R(\pi/2)=\rme^{\rmi\theta}\psi_L(-\pi/2),
\qquad
\mathcal{D}\psi_R(\pi/2)=\rme^{\rmi\theta}\mathcal{D}\psi_L(-\pi/2)\,,
\\
[-.8em] \label{floquet}
\\
[-.8em]
\varphi_R(\pi/2)=\rme^{\rmi\theta}\varphi_L(-\pi/2),
\qquad
\mathcal{D}\varphi_R(\pi/2)=\rme^{\rmi\theta}\mathcal{D}\varphi_L(-\pi/2)\,,
\nonumber
\end{eqnarray}
with $\theta$ running through $[-\pi,\pi)$; alternatively we may say that the quasimomentum $\frac1{2\pi}\theta$ runs through $[-1/2,1/2)$, the Brillouin zone of the problem. In both cases the vector potential contributions subtract and $\mathcal{D}$ can be replaced by the usual derivative.

Substituting \eref{wavefunct} into \eref{deltacoupl} and \eref{floquet}, one obtains after simple manipulations an equation for the phase factor $\rme^{\rmi\theta}$, namely
\begin{equation}\label{squareEq}
\sin k\pi\cos A\pi
(\rme^{2\rmi\theta}-2\xi(k)\rme^{\rmi\theta}
+1)=0,
\end{equation}
with
\[
\xi(k) := \frac{1}{\cos A\pi}
\Big(\cos k\pi + \frac{\alpha}{4k}\sin k\pi\Big)\,,
\]
which has real coefficients for any $ k \in\mathbb{R}\cup\rmi\mathbb{R}\setminus\{0\}$ and the discriminant equal to
\[
D=4(\xi(k)^2-1)\,.
\]
We will treat the special cases $A-\frac12\in\mathbb{Z}$ and $k \in\mathbb{N}$ later, now we will suppose $A-\frac12$ does not belong to $\mathbb{Z}$, the set of integer numbers, while $k$  does not belong to $\mathbb{N}$, the set of natural numbers. We have to determine values of $k ^2$ for which there is a $\theta\in[-\pi,\pi)$ such that \eref{squareEq} is satisfied, in other words, for which $ k ^2$ it has, as an equation in the unknown
$\rme^{\rmi\theta}$, at least one root of modulus one. Note that a pair of solutions of \eref{squareEq} always give one when multiplied, regardless the value of $ k $, hence either both roots are complex conjugated of modulus one, or one is of modulus greater than one and the other has modulus smaller than one. Obviously, the latter situation corresponds to a positive discriminant, and the former one to the discriminant less or equal to zero. We summarize this discussion as follows:

\begin{prop}\label{prop:spectr}
Suppose that $A-\frac12\notin\mathbb{Z}$ and $k\in\mathbb{R}^+\cup\rmi\mathbb{R}^+\setminus\mathbb{N}$. Then $k ^2\in\sigma(-\Delta_\alpha)$ holds if and only if the condition
\begin{equation}
\label{ineq:spectr}
|\xi(k)|
\leq 1
\end{equation}
is satisfied.
\end{prop}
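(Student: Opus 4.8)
The plan is to combine the standard Bloch--Floquet reduction with the elementary root count for \eref{squareEq} that was already prepared in the paragraph preceding the statement. First I would recall that, by Floquet theory for periodic operators on metric graphs \cite[Sec.~4.2]{BK13}, one has $\sigma(-\Delta_\alpha)=\overline{\bigcup_{\theta\in[-\pi,\pi)}\sigma(-\Delta_\alpha(\theta))}$, where $-\Delta_\alpha(\theta)$ denotes the fibre operator on the compact elementary cell of Fig.~\ref{fig:cell} equipped with the matching conditions \eref{deltacoupl} at the contact point and the $\theta$-Floquet conditions \eref{floquet} at the free ends. Each fibre operator is self-adjoint on a compact graph, hence has purely discrete spectrum, so $k^2\in\sigma(-\Delta_\alpha(\theta))$ precisely when the homogeneous boundary value problem \eref{wavefunct}--\eref{floquet} admits a nontrivial solution; thus it remains to decide for which $k^2$ there exists $\theta\in[-\pi,\pi)$ making that linear system solvable.

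Next I would use that, under the standing hypotheses $A-\frac12\notin\mathbb{Z}$ and $k\notin\mathbb{N}$, the prefactor $\sin k\pi\cos A\pi$ in \eref{squareEq} is nonzero, so the secular condition collapses to the quadratic $\rme^{2\rmi\theta}-2\xi(k)\rme^{\rmi\theta}+1=0$ in the unknown $z=\rme^{\rmi\theta}$, whose coefficients are real since $\xi(k)\in\mathbb{R}$ for $k\in\mathbb{R}^+\cup\rmi\mathbb{R}^+$. Here one must verify that the reduction leading to \eref{squareEq} is a genuine equivalence, i.e.\ that \eref{squareEq} really is the secular equation of the system and not merely a consequence of it; this is where the two hypotheses (and the separately treated loci $\sin k\pi=0$, $\cos A\pi=0$) enter, and I expect this bookkeeping to be the only real technical point.

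Finally I would invoke the reciprocity already noted in the text: the product of the two roots of the quadratic equals its constant term $1$, so the roots form either a complex-conjugate pair on the unit circle or a pair $\{w,w^{-1}\}$ with $|w|\neq1$. If $|\xi(k)|\le1$, the discriminant $D=4(\xi(k)^2-1)$ is nonpositive and the roots are $\xi(k)\pm\rmi\sqrt{1-\xi(k)^2}$, both of modulus one; choosing the matching $\theta\in[-\pi,\pi)$ gives $k^2\in\sigma(-\Delta_\alpha(\theta))$, hence $k^2\in\sigma(-\Delta_\alpha)$. If instead $|\xi(k)|>1$, then $D>0$, the two roots are real, distinct and mutually reciprocal, so one lies strictly inside and the other strictly outside the unit circle, no admissible $\theta$ solves \eref{squareEq}, and $k^2\notin\sigma(-\Delta_\alpha)$. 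The outer closure in the Floquet formula adds nothing within the range of $k$ under consideration since $\{k:\ |\xi(k)|\le1\}$ is already closed there, so $k^2\in\sigma(-\Delta_\alpha)$ if and only if \eref{ineq:spectr} holds.
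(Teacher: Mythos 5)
Your proposal is correct and follows essentially the same route as the paper: the Bloch--Floquet reduction to the quadratic secular equation \eref{squareEq} in $\rme^{\rmi\theta}$, followed by the observation that its real-coefficient roots have product one, so they lie on the unit circle precisely when the discriminant $4(\xi(k)^2-1)$ is nonpositive, i.e.\ when $|\xi(k)|\leq1$. The extra care you flag (that \eref{squareEq} is a genuine equivalence under $\sin k\pi\cos A\pi\neq0$, and the closure in the Floquet formula) is exactly the bookkeeping the paper handles by excluding $k\in\mathbb{N}$ and $A-\frac12\in\mathbb{Z}$, so nothing essential is missing.
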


\medskip

In particular, the negative spectrum is obtained by putting $ k =\rmi\kappa$ for $\kappa>0$ and rewriting the inequality \eref{ineq:spectr} in terms of this variable. Note that since $\sinh x\neq0$ for all $x > 0$, it never occurs that $\sin k \pi = 0$ for $ k \in\rmi\mathbb{R}^+$, the positive imaginary axis, thus there is no need to treat this case separately like for $ k \in \mathbb{R}^+$, cf. \eref{squareEq} above.

\begin{cor}
If $A-\frac12\notin\mathbb{Z}$ and $\kappa>0$, then $-\kappa^2\in\sigma(-\Delta_\alpha)$ holds if and only if
\begin{equation*}
\Big|
\frac{1}{\cos A\pi}
\Big(
\cosh\kappa\pi +\frac{\alpha}{4\kappa}
\sinh\kappa\pi
\Big)
\Big|
\leq 1\,.
\end{equation*}
\end{cor}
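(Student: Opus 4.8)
The plan is to obtain the corollary as a direct specialization of Proposition~\ref{prop:spectr} via the substitution $k=\rmi\kappa$. First I would check that the hypotheses of the proposition are met: for every $\kappa>0$ the point $k=\rmi\kappa$ lies in $\rmi\mathbb{R}^+$ and, being purely imaginary, certainly does not belong to $\mathbb{N}$, so the requirement $k\in\mathbb{R}^+\cup\rmi\mathbb{R}^+\setminus\mathbb{N}$ holds automatically; combined with the standing assumption $A-\frac12\notin\mathbb{Z}$, Proposition~\ref{prop:spectr} applies verbatim. It therefore tells us that $-\kappa^2=(\rmi\kappa)^2=k^2\in\sigma(-\Delta_\alpha)$ if and only if $|\xi(\rmi\kappa)|\le 1$.

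The next step is to evaluate $\xi$ at $k=\rmi\kappa$ using the elementary identities $\cos(\rmi\kappa\pi)=\cosh\kappa\pi$ and $\sin(\rmi\kappa\pi)=\rmi\sinh\kappa\pi$. The troublesome-looking term becomes $\frac{\alpha}{4\rmi\kappa}\,\rmi\sinh\kappa\pi=\frac{\alpha}{4\kappa}\sinh\kappa\pi$, so that
\[
\xi(\rmi\kappa)=\frac{1}{\cos A\pi}\Big(\cosh\kappa\pi+\frac{\alpha}{4\kappa}\sinh\kappa\pi\Big),
\]
which is a genuine real number, since $\cosh\kappa\pi$, $\sinh\kappa\pi$, $\kappa$ and $\alpha$ are real and $\cos A\pi\neq0$ precisely because $A-\frac12\notin\mathbb{Z}$. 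Inserting this expression into $|\xi(\rmi\kappa)|\le 1$ then produces exactly the inequality stated in the corollary.

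The only point deserving a comment is that in Proposition~\ref{prop:spectr} the natural numbers had to be excluded, the reason being that the prefactor $\sin k\pi$ in \eref{squareEq} vanishes for $k\in\mathbb{N}$; on the imaginary axis this does not happen, as $\sin(\rmi\kappa\pi)=\rmi\sinh\kappa\pi\neq0$ for every $\kappa>0$, a fact already recorded in the text preceding the corollary. Hence no separate treatment of exceptional energies is needed and the equivalence holds for all $\kappa>0$. I do not expect any real obstacle in this argument: the whole content is the substitution $k\mapsto\rmi\kappa$ together with the passage from trigonometric to hyperbolic functions, everything else being inherited from the proposition.
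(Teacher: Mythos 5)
Your proposal is correct and follows exactly the route the paper itself takes: substitute $k=\rmi\kappa$ into Proposition~2.1, use $\cos(\rmi\kappa\pi)=\cosh\kappa\pi$ and $\sin(\rmi\kappa\pi)=\rmi\sinh\kappa\pi$, and observe that $\sinh\kappa\pi\neq0$ so no exceptional energies arise on the imaginary axis. Nothing further is needed.
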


\medskip

\noindent Observe that in the case $E=0$ we repeat similar arguments to get the equation
\begin{equation*}
\rme^{2\rmi\theta}-
\frac{2\rme^{\rmi\theta}}{\cos A\pi}\Big(1+\frac{\alpha\pi}{4}\Big)+1=0\,,
\end{equation*}
replacing \eref{squareEq}, whence we infer that $0\in\sigma(-\Delta_\alpha)$ holds if and only if
\[
\Big|\frac{1}{\cos A\pi}\Big(1+\frac{\alpha\pi}{4}\Big)\Big|
\leq 1\,,
\]
hence zero can belong to the continuous part of the spectrum only and this happens \emph{iff} $\alpha\in[-4(|\cos A\pi|+1)/\pi,4(|\cos A\pi|-1)/\pi]$.

Let us finally mention the cases $k\in\mathbb{N}$ and  $A-\frac12\in\mathbb{Z}$ left out above. In the former one it is straightforward to check that $k^2$ is an eigenvalue, and moreover, that it has an infinite multiplicity. One can construct an eigenfunction which is supported by two adjacent circles, which is given by $\psi_L(x) =\rme^{-\rmi Ax} \sin kx$ and $\psi_R(x) =(-1)^{k+1}\rme^{\rmi A(\pi-x)} \sin kx$ with $x\in[0,\pi]$ on the upper semicircles and $\psi_L(x) = -\rme^{\rmi Ax}\sin kx$  and $\psi_R(x) =(-1)^ k \rme^{\rmi A(x-\pi)}\sin kx$ with $x\in[0,\pi]$ on the lower ones. In the extremal case $A\in\mathbb{Z}$ when $|\cos A\pi|=1$, we can construct an eigenfunction supported by a single circle, namely, $\psi(x) =\rme^{-\rmi Ax} \sin kx$ on the upper semicircle and $\psi(x) = -\rme^{\rmi Ax}\sin kx$ on the lower one. In the case $A-\frac12\in \mathbb{Z}$ the spectrum is produced by solutions to the equation $\cos k\pi +\frac{\alpha}{4k} \sin k\pi=0$, each of which yields an eigenvalue of an infinite multiplicity.
One can construct an eigenfunction supported by two adjacent circles, which is given by $\psi_L(x)=-\rme^{-\rmi Ax}\sin kx$, $x\in[0,\pi]$, and
$\psi_R(x)=\rme^{\rmi A(\pi-x)}\sin k(\pi-x)$, $x\in[0,\pi]$, on the upper semicircles and by $\varphi_L(x)=\rme^{\rmi Ax}\sin kx$, $x\in[0,\pi]$,
and $\varphi_R(x)=\rme^{\rmi A(x-\pi)}\sin k(x-\pi)$, $x\in[0,\pi]$, on the lower ones.

In conclusion, we can make the following claim about $\sigma(-\Delta_\alpha)$.

\begin{thm}[Magnetic case]
\label{thm:spectrumUnperturbed}
Assume that $A\notin\mathbb{Z}$. If $A-\frac12\in\mathbb{Z}$; then the spectrum of $-\Delta_\alpha$ consists of two series of infinitely degenerate eigenvalues $\{k^2\in\mathbb{R}\colon\, \xi(k)=0\}$ and $\{k^2\in\mathbb{R}\colon\, k\in\mathbb{N}\}$.

On the other hand, suppose that $A-\frac12\notin\mathbb{Z}$; then the spectrum of $-\Delta_\alpha$ consists of infinitely degenerate eigenvalues equal to $k^2$ with $k\in\mathbb{N}$, and absolutely continuous spectral bands with the following properties:

Every spectral band except the first one is contained in an interval $(n^2,(n+1)^2)$ with $n\in\mathbb{N}$. The position of the first spectral band depends on $\alpha$, namely, it is included in $(0,1)$ if  $\alpha>4(|\cos A\pi|-1)/\pi$ or it is negative if $\alpha<-4(|\cos A\pi|+1)/\pi$, otherwise the first spectral band contains zero.
\end{thm}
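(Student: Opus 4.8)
The plan is to assemble the theorem from the analysis already carried out for \eref{squareEq}, treating the generic and exceptional values of $A$ and $k$ separately. First I would dispose of the exceptional cases, which have essentially been handled in the text preceding the statement: when $A-\frac12\in\mathbb{Z}$ one has $\cos A\pi=0$, so \eref{squareEq} degenerates and the spectral condition reduces to $\cos k\pi+\frac{\alpha}{4k}\sin k\pi=0$, i.e. $\xi(k)=0$ in the natural limiting sense; the explicit eigenfunctions written above, supported on two adjacent circles, show each such $k^2$ is an eigenvalue, and since the elementary cell can be shifted independently along the chain one gets infinitely many linearly independent eigenfunctions, hence infinite multiplicity. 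Likewise, for $k\in\mathbb{N}$ the factor $\sin k\pi$ vanishes and the explicit compactly supported eigenfunctions exhibited above (valid whenever $A\notin\mathbb{Z}$) give infinitely degenerate eigenvalues at $k^2=n^2$. This yields the first paragraph of the theorem and the "infinitely degenerate eigenvalues $k\in\mathbb{N}$" clause of the second.

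Next, for $A-\frac12\notin\mathbb{Z}$ and $k\in\mathbb{R}^+\cup\rmi\mathbb{R}^+\setminus\mathbb{N}$, I would invoke Proposition~\ref{prop:spectr}: $k^2\in\sigma(-\Delta_\alpha)$ iff $|\xi(k)|\le1$. The absolute continuity on the bands is standard Bloch--Floquet theory for periodic quantum graphs \cite[Sec.~4.2]{BK13} once one checks the band functions are nonconstant, which follows because $\xi$ is a nonconstant analytic function of $k$ on each interval $(n,n+1)$. To locate the bands I would study $\xi$ on each interval $(n,n+1)$: as $k\to n^+$ and $k\to(n+1)^-$ one has $\sin k\pi\to0$ and $\cos k\pi\to\pm1$, so $\xi(k)\to\pm\frac{1}{\cos A\pi}$, whose modulus exceeds $1$ (strictly, since $A\notin\mathbb{Z}$), so the endpoints of each interval $(n^2,(n+1)^2)$ lie outside the spectrum; hence every band contained in $k>1$ sits strictly inside some $(n^2,(n+1)^2)$. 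For the first band one must also examine the behaviour near $k=1$ from below and the negative/zero-energy regime: there the analysis already given shows $0\in\sigma(-\Delta_\alpha)$ iff $\alpha\in[-4(|\cos A\pi|+1)/\pi,\,4(|\cos A\pi|-1)/\pi]$, and a monotonicity/sign inspection of $\xi$ (resp. of $\frac{1}{\cos A\pi}(\cosh\kappa\pi+\frac{\alpha}{4\kappa}\sinh\kappa\pi)$ for negative energies) pins down whether the first band is positive, straddles zero, or is negative, according to the stated thresholds on $\alpha$.

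The main obstacle I anticipate is not any single estimate but the careful case bookkeeping for the first band: one has to combine the positive-energy condition $|\xi(k)|\le1$ on $(0,1)$, the zero-energy condition, and the negative-energy condition into a single consistent picture, and verify that exactly one connected band results and that its location switches at precisely $\alpha=4(|\cos A\pi|-1)/\pi$ and $\alpha=-4(|\cos A\pi|+1)/\pi$. This requires knowing the sign and monotonicity of $k\mapsto\cos k\pi+\frac{\alpha}{4k}\sin k\pi$ near $k=0^+$ and of its hyperbolic counterpart, together with the fact that $|\cos A\pi|<1$ under the hypothesis $A-\frac12\notin\mathbb{Z}$, $A\notin\mathbb{Z}$. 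Everything else — the exceptional cases and the "one band per gap" statement for higher bands — is a direct consequence of the formula for $\xi$ and the endpoint values computed above.
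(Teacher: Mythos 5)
Your proposal is correct and follows essentially the same route as the paper, whose proof is simply the observation that the theorem "follows directly from Proposition~\ref{prop:spectr} and the explicit formul{\ae} given above" (the explicit eigenfunctions for $k\in\mathbb{N}$ and $A-\frac12\in\mathbb{Z}$, the zero-energy condition, and the behaviour of $\xi$ near integer $k$ and for negative energies). Your write-up in fact supplies more detail (band-edge values $\pm1/\cos A\pi$, the first-band bookkeeping) than the paper itself records.
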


\medskip

\noindent \textsl{The proof} follows directly from Proposition~\ref{prop:spectr} and the explicit formul{\ae} given above. The behavior of the first spectral band as a function of the coupling constant for a fixed value of the magnetic flux is illustrated in Fig.~\ref{fig:FirstBand}.

\begin{figure}[h!]
\centering
    \includegraphics[scale=0.5]{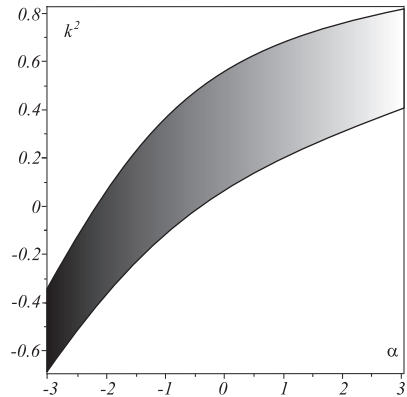}
  \caption{The first spectral band of the operator $-\Delta_\alpha$ against $\alpha$ at $\cos A\pi=0.7$.}\label{fig:FirstBand}
\end{figure}

The reader may wonder why integer values of $A$ are not included in the `magnetic' case. The reason is seen from the fact that the above spectral condition is invariant with respect to the change of $A$ by an integer which reflects the existence of a simple gauge transformation between such cases. We note that in the chosen units the magnetic flux quantum is $2\pi$ and $A=\frac12 B= \frac{1}{2\pi}\Phi$; we can then rephrase the above claim saying the systems differing by an integer number of flux quanta through each ring are physically equivalent. In this sense the case of an integer $A$ is  thus equivalent to the non-magnetic chain treated in \cite{DET08}; to make the exposition self-contained, it is nevertheless useful to single it out and state it explicitly.

\begin{thm}[Non-magnetic case]
\label{thm:spectrumUnperturbed}
Suppose that $A\in\mathbb{Z}$; then the spectrum of $-\Delta_\alpha$ consists of infinitely degenerate eigenvalues equal to $k^2$ with $k\in\mathbb{N}$, and absolutely continuous spectral bands with the following properties:

If $\alpha > 0$, then every spectral band is contained in an interval $(n^2,(n+1)^2]$ with $n\in\mathbb{N}\cup\{0\}$, and its upper edge coincides with the value $(n+1)^2$.

If $\alpha<0$, then in each interval $[n^2,(n+1)^2)$ with $n\in\mathbb{N}$ there is exactly one spectral band the lower edge of which coincides with $n^2$. In addition, there is a spectral band with the lower edge equal to $k^2<0$, where $k$ is the solution to the equation $|\xi(k)|=1$ with the smallest square. The upper edge of this band is produced by the solution of the equation with the second smallest square. The position of the upper edge of this band depends on $\alpha$, namely if $-8/\pi<\alpha<0$, then it is contained in $(0,1)$. On the other hand, for $\alpha<-8/\pi$ the upper edge is negative and for $\alpha=-8/\pi$ it equals zero.
\end{thm}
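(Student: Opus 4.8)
\medskip

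\noindent\textsl{Proof proposal.} Since $A\in\mathbb{Z}$ implies $|\cos A\pi|=1$, the spectral function becomes $\xi(k)=\pm g(k)$ with
\[
g(k):=\cos k\pi+\frac{\alpha}{4k}\sin k\pi\,,
\]
so that $|\xi(k)|=|g(k)|$; moreover $A-\frac12\notin\mathbb{Z}$, whence Proposition~\ref{prop:spectr} applies and gives $k^2\in\sigma(-\Delta_\alpha)$ if and only if $|g(k)|\le1$, for every $k\in\mathbb{R}^+\cup\rmi\mathbb{R}^+\setminus\mathbb{N}$. The values $k\in\mathbb{N}$ are covered by the single-circle eigenfunction written out above --- available precisely because $|\cos A\pi|=1$ --- together with its shifts along the chain, which exhibit every $k^2$, $k\in\mathbb{N}$, as an eigenvalue of infinite multiplicity. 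It thus remains to describe the closure of $\{k^2\colon|g(k)|\le1\}$, the absolutely continuous band part.

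On the real axis I would pass to the amplitude--phase form $g(k)=R(k)\cos\Phi(k)$, with $R(k)=\sqrt{1+\alpha^2/16k^2}$, $\Phi(k)=k\pi-\delta(k)$ and $\delta(k)=\arctan(\alpha/4k)$, so that $|g(k)|\le1$ is equivalent to $\Phi(k)\bmod\pi\in[\,|\delta(k)|,\pi-|\delta(k)|\,]$. Since $\Phi'(k)=\pi+4\alpha/(16k^2+\alpha^2)$, the phase is strictly increasing for $\alpha>0$ everywhere, and for $\alpha<0$ at least for $k\ge1$, where $16k^2+\alpha^2\ge16+\alpha^2\ge8|\alpha|$ gives $4|\alpha|/(16k^2+\alpha^2)\le\frac12<\pi$; wherever $\Phi$ is monotone the band condition holds on exactly one subinterval of each $(n,n+1)$. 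Reading off its endpoints, for $\alpha>0$ the equation $\Phi(k)\bmod\pi=\pi-\delta(k)$ forces $k=n+1$, so the band in $(n,n+1)$ has right edge exactly $n+1$ and left edge the unique solution of $\Phi(k)\bmod\pi=\delta(k)$, whence the corresponding spectral band lies in $(n^2,(n+1)^2]$ with upper edge $(n+1)^2$. For $\alpha<0$ one instead has $\Phi(n)\bmod\pi=|\delta(n)|$, so $k=n$ is a \emph{left} band edge for every $n\ge1$ and the band lies in $[n^2,(n+1)^2)$ with lower edge $n^2$. On the first interval, where the amplitude form degenerates at $k=0$, I would argue directly: for $\alpha>0$ we have $g(0^+)=1+\alpha\pi/4>1$ and $g(\rmi\kappa)=\cosh\kappa\pi+\frac{\alpha}{4\kappa}\sinh\kappa\pi>1$ for all $\kappa>0$, so there is no spectrum below a point of $(0,1)$ and the bands are exactly the ones just described.

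The low-lying spectrum for $\alpha<0$ is the delicate point, and here I would put the inequalities $-1\le g\le1$ into closed form. Using $(\cosh t-1)/\sinh t=\tanh(t/2)$ and $(\cosh t+1)/\sinh t=\coth(t/2)$ one obtains, for $k=\rmi\kappa$ with $\kappa>0$,
\[
g(\rmi\kappa)\le1\iff|\alpha|\ge4\kappa\tanh(\kappa\pi/2)=:\phi_1(\kappa)\,,
\qquad
g(\rmi\kappa)\ge-1\iff|\alpha|\le4\kappa\coth(\kappa\pi/2)=:\phi_2(\kappa)\,,
\]
while the trigonometric analogues give $g(k)\le1$ for all $k\in(0,1)$ and $g(k)\ge-1\iff|\alpha|\le4k\cot(k\pi/2)=:\psi(k)$ there. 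A short computation shows that $\phi_1$ increases from $0$ to $\infty$, that $\phi_2$ increases from $\frac8\pi$ to $\infty$, that $\psi$ decreases from $\frac8\pi$ to $0$, and that $\phi_2>\phi_1$ (since $\tanh<\coth$). Combined with $g(0^+)=1+\frac{\alpha\pi}{4}$ and $0\in\sigma(-\Delta_\alpha)\iff-\frac8\pi\le\alpha\le0$, this produces exactly one further band, whose lower edge $-\kappa_1^2$ (with $\phi_1(\kappa_1)=|\alpha|$) realizes the solution of $|g(k)|=1$ of smallest square and whose upper edge realizes the one of second smallest square: it equals $k_*^2\in(0,1)$ with $\psi(k_*)=|\alpha|$ when $-\frac8\pi<\alpha<0$, it equals $0$ when $\alpha=-\frac8\pi$, and it equals $-\kappa_2^2<0$ with $\phi_2(\kappa_2)=|\alpha|$ when $\alpha<-\frac8\pi$ --- precisely the three alternatives of the statement.

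The routine part is the bookkeeping; the real work, and the main obstacle, is the fine analysis of the transcendental functions $g(k)$ and $g(\rmi\kappa)$: securing the monotonicity of $\Phi$ (which degenerates near $k=0$ for small $|\alpha|$ and forces the separate first-interval treatment), the monotonicity of $\phi_1$, $\phi_2$, $\psi$, and a precise count of how often $g(k)$ and $g(\rmi\kappa)$ cross the levels $\pm1$, so that each slot carries exactly one band with its edges located where claimed.
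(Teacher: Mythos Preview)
Your proposal is correct and follows essentially the same route as the paper, which for this theorem offers no argument beyond the remark that it ``follows directly from Proposition~\ref{prop:spectr} and the explicit formul{\ae} given above'' together with a citation of \cite{DET08} for the non-magnetic chain. You have simply carried out in full the analysis the paper leaves implicit --- the amplitude--phase rewriting of $g(k)$, the monotonicity bookkeeping, and the $\phi_1,\phi_2,\psi$ description of the lowest band --- and these details are all sound.
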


\medskip

It is also worth mentioning that if $|\cos A\pi|<1$, the interval $(n^2,(n+1)^2)$ contains one band and two gaps parts. Moreover, if $|\cos A\pi|$ increases, the both gaps shrink, and in addition, if  $\alpha$ is positive (negative), then the right (respectively, left) gap vanishes in the limit  $|\cos A\pi|\to1$. Finally, for $\alpha=0$ the spectrum of the Hamiltonian coincides with the positive half-line for $|\cos A\pi|=1$, while it keeps the band structure with open gaps for $|\cos A\pi|<1$.


\section{Duality between differential and difference operators}
\label{sec:duality}

Denote by $\balpha=\{\alpha_j\}_{j\in\mathbb{Z}}$ an arbitrary but fixed sequence of real numbers and consider the corresponding magnetic Laplacian $-\Delta_{\balpha}$ on our chain graph, that is, the operator acting as $-\mathcal{D}^2$ on each graph edge with the domain consisting of those functions from the Sobolev space that satisfy the $\delta$ boundary conditions at the graph vertices,
\begin{eqnarray}
\label{discOperCoupl1}
\psi_j(j\pi) = \varphi_j(j\pi) = \psi_{j-1}(j\pi) = \varphi_{j-1}(j\pi)\,,
\\
\label{discOperCoupl2}
\mathcal{D}\psi_j(j\pi)+\mathcal{D}\varphi_j(j\pi)-
\mathcal{D}\psi_{j-1}(j\pi)-\mathcal{D}\varphi_{j-1}(j\pi)
=\alpha_j\psi_j(j\pi)\,.
\end{eqnarray}
As before the wave function component $\psi_j$ corresponds here to the upper halfcircle of the $j$th ring while $\varphi_j$ stands for the lower one, and $\mathcal{D}$ can be replaced by the usual derivative. Our aim in this section is to formulate a one-to-one correspondence between the differential  operator $-\Delta_{\balpha}$ in $L^2(\Gamma)$ and a certain operator acting in $\ell_2(\mathbb{Z})$. We seek a difference equation such that every bounded (square summable) solution of it gives rise to a bounded (square integrable) solution of the Schr\"{o}dinger equation corresponding to $-\Delta_{\balpha}$, and vice versa every bounded (square integrable) solution of the Schr\"{o}dinger equation produces a bounded (square summable) solution of the difference equation. This connection will play a crucial role in the following sections in determining the spectrum of $-\Delta_{\balpha}$ for particular sequences $\balpha$.

Denote by $\psi\choose\varphi$ the general solutions of the Schr\"{o}dinger equation
\[
(-\Delta_{\balpha}- k^2)
\left(
	\begin{array}{c}
	\psi(x,k)\\[.5em] \varphi(x,k)
	\end{array}
\right)=0\,,\qquad \Im  k\geq 0\,,
\]
which can be rewritten componentwise on the upper and lower semicircles as follows
\begin{eqnarray}\nonumber
(-\mathcal{D}^2- k^2)\psi(x, k)&=&0
\,,\qquad \Im  k\geq 0\,,\quad x\in \mathcal{I}_j\,,
\\[-.8em]
\label{eq:psi&phi}
\\[-.8em]\nonumber
(-\mathcal{D}^2- k^2)\varphi(x, k)&=&0
\,,\qquad \Im  k\geq 0\,,\quad x\in \mathcal{I}_j\,,
\end{eqnarray}
where the interval $\mathcal{I}_j:=(j\pi,(j+1)\pi)$ corresponds to both the upper and lower halfcircles. Recalling that the wave function should satisfy the continuity  condition \eref{discOperCoupl1}, we see that the general solutions $\psi$ and $\varphi$ and their quasiderivatives are given by
\begin{eqnarray}\nonumber
\fl
\psi(x, k)=
\rme^{\rmi A(j\pi-x)}
\bigg(
\psi(j\pi, k)\cos k(x-j\pi)+\mathcal{D}\psi((j\pi)^+, k)\,\frac{\sin  k(x-j\pi)}{ k}
\bigg)
\,,\\[-.8em]
\label{eq:psi}
\\[-0.8em]\nonumber
\fl\mathcal{D}\psi(x, k)=
\rme^{\rmi A(j\pi-x)}
(
-\psi(j\pi, k)\, k\sin k(x-j\pi)+\mathcal{D}\psi((j\pi)^+, k)\cos k(x-j\pi)
)
\end{eqnarray}
and
\begin{eqnarray}\nonumber
\fl \varphi(x, k)=
\rme^{\rmi A(x-j\pi)}
\bigg(
\varphi(j\pi, k)\cos k(x-j\pi)+\mathcal{D}\varphi((j\pi)^+, k)
\,\frac{\sin  k(x-j\pi)}{ k}
\bigg)
\,,\\[-.8em]
\label{eq:phi}
\\[-.5em]\nonumber
\fl\mathcal{D}\varphi(x, k)=
\rme^{\rmi A(x-j\pi)}
(
-\varphi(j\pi, k)\,  k\sin k(x-j\pi)+\mathcal{D}\varphi((j\pi)^+, k)\cos  k(x-j\pi))
\end{eqnarray}
for $\Im k\geq0$ and $x\in\mathcal{I}_j$. Next, let us introduce the vector
\[
\Psi_j(k,\tau)=
\left(
\begin{array}{c}
\rme^{\tau}\psi(j\pi, k)+\rme^{-\tau}\varphi(j\pi,k)
\\[0.5em]
\rme^{\tau}\mathcal{D}\psi((j\pi)^-, k)+\rme^{-\tau}\mathcal{D}\varphi((j\pi)^-, k)
\end{array}\right),
\]
and the matrix
\[
S_j(k)=
\left(\begin{array}{cc}
\cos k\pi+\frac{\alpha_j}{2 k}\sin k\pi&
\frac1 k\sin k\pi
\\[0.5em]
- k\sin k\pi +\frac{\alpha_j}2\cos k\pi
&
\cos k\pi
\end{array}\right)\,,
\]
then taking into account relation \eref{discOperCoupl2} we conclude that
\[
S_j( k)\Psi_j(k,0)
=
\Psi_{j+1}(k,\rmi A\pi)\,,
\qquad \Im  k \geq 0,\quad j\in\mathbb{Z}\,.
\]
In a similar spirit, we introduce the vector
\[
\Phi_j( k)=
\left(
\begin{array}{c}
\psi(j\pi, k)+\varphi(j\pi, k)
\\[0.5em]
\psi((j-1)\pi,k)+\varphi((j-1)\pi,k)
\end{array}\right)
\]
to obtain, in view of \eref{eq:psi} and \eref{eq:phi}, the relation
\[
(\cos A\pi)\Phi_j( k)
=
T( k)\Psi_j( k,0)
\]
where the matrix $T$ is defined as follows
\[
T( k)
=
\left(\begin{array}{cc}
\cos A\pi & 0
\\[0.5em]
\cos k\pi & -\frac1 k\sin k\pi
\end{array}\right)\,.
\]
Finally, define the matrix
\[
N_j( k)
=T(k)S_j(k)(T(k))^{-1}
=\left(\begin{array}{cc}
2\xi_j(k)&-1
\\[.5em]
1&0
\end{array}\right)\,,
\]
where $ k\in\mathfrak{K}:=\{z\colon\Im z\geq0\wedge z\notin\mathbb{Z}\}$ and
\[
\xi_j(k)=\frac{1}{\cos A\pi}
\Big(
	\cos k\pi
		+
	\frac{\alpha_j}{4k}\sin k\pi
\Big)\,,
\]
to get the relation
\[
{N}_j( k)\Phi_j( k)=\Phi_{j+1}( k)\,,\qquad  k\in\mathfrak{K}\,,
\]
which by continuity of the wave function at the graph vertices can be rewritten as
\[
{N}_j( k)
\left(
	\begin{array}{c}
	\psi_j(k)\\[.5em]
	\psi_{j-1}(k)
	\end{array}
\right)
=
\left(
	\begin{array}{c}
	\psi_{j+1}(k)\\[.5em]
	 \psi_{j}(k)
	\end{array}
\right)
\,,\qquad  k\in\mathfrak{K}\,,
\]
or equivalently as
\begin{equation}\label{discr_relat}
\psi_{j+1}(k)
+
\psi_{j-1}(k)
=
\xi_j(k)\psi_j(k)\,,\qquad  k\in\mathfrak{K}\,,
\end{equation}
where $\psi_j( k):=\psi(j\pi, k)$. Summing up the above reasoning, we have arrived at the following conclusion:

\begin{thm}\label{thm:Duality}
Suppose that $\alpha_j\in\mathbb{R}$; then any solutions $\psi(\cdot, k)$ and $\varphi(\cdot, k)$, $ k^2\in\mathbb{R}$, $ k\in\mathfrak{K}$, of \eref{eq:psi&phi} satisfy relation \eref{discr_relat}. Conversely, any solution of the difference equation \eref{discr_relat} defines via
\begin{eqnarray}\nonumber
\psi(x, k)&=
\rme^{-\rmi A(x-j\pi)}
\bigg[
\psi_j(k)\cos k(x-j\pi)
\\\nonumber
&\phantom{=}\,
+
(\psi_{j+1}( k)\rme^{\rmi A\pi}-\psi_j(k)\cos k\pi)
\frac{\sin k(x-j\pi)}{\sin k\pi}
\bigg]
\,,\quad x\in\mathcal{I}_j\,,
\\[-.6em]
\label{eq:psi(x)VSpsi(k)}
\\[-.6em]\nonumber
\varphi(x, k)&=
\rme^{\rmi A(x-j\pi)}
\bigg[
\psi_j(k)\cos k(x-j\pi),
\\\nonumber
&\phantom{=}\,
+
(\psi_{j+1}(k)\rme^{-\rmi A\pi}-\psi_j(k)\cos k\pi)
\frac{\sin k(x-j\pi)}{\sin k\pi}
\bigg]
\,,
\quad x\in\mathcal{I}_j\,,
\end{eqnarray}
solutions of equations \eref{eq:psi&phi} satisfying $\delta$-coupling conditions \eref{discOperCoupl1}, \eref{discOperCoupl2}. In addition,
$\Big(\begin{array}{c}\psi(\cdot, k)\\[-.3em]\varphi(\cdot, k)\end{array}\Big)\in L^p(\Gamma)$ if and only if $\{\psi_j(k)\}_{j\in\mathbb{Z}}\in \ell_p(\mathbb{Z})$ for $p\in\{2,\infty\}$.
\end{thm}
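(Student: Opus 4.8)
The plan is to establish the three assertions of the theorem in turn: first, that a solution of the differential problem produces a solution of the difference equation \eref{discr_relat}; second, that the ansatz \eref{eq:psi(x)VSpsi(k)} turns a solution of \eref{discr_relat} back into a solution of the differential problem obeying the $\delta$-coupling \eref{discOperCoupl1}--\eref{discOperCoupl2}; and third, that the two objects have comparable $L^p$ and $\ell_p$ norms. The first assertion is essentially contained in the chain of matrix identities derived just above the theorem, so I would only need to assemble it. The second is a substitution whose sole nontrivial ingredient is the derivative-jump condition. The third is a cell-by-cell norm comparison. Throughout, $\sin k\pi\neq0$ for $k\in\mathfrak{K}$, and $\cos A\pi\neq0$ (the case $A-\frac12\in\mathbb{Z}$ being excluded here); hence $T(k)$ is invertible, which is precisely what makes each step reversible.

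For the first assertion, let $\psi,\varphi$ satisfy \eref{eq:psi&phi} together with \eref{discOperCoupl1}, \eref{discOperCoupl2}. The continuity condition \eref{discOperCoupl1} yields the representations \eref{eq:psi}, \eref{eq:phi} of $\psi,\varphi$ and their quasiderivatives on $\mathcal{I}_j$ in terms of the data at $j\pi$; inserting these into \eref{discOperCoupl2} gives $S_j(k)\Psi_j(k,0)=\Psi_{j+1}(k,\rmi A\pi)$, and combining this with $(\cos A\pi)\Phi_j(k)=T(k)\Psi_j(k,0)$ and $N_j(k)=T(k)S_j(k)(T(k))^{-1}$ produces $N_j(k)\Phi_j(k)=\Phi_{j+1}(k)$, whose first row --- after using continuity of $\psi$ at the vertices --- is exactly \eref{discr_relat}. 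I would simply record this chain.

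For the second assertion, take a solution $\{\psi_j(k)\}_{j\in\mathbb{Z}}$ of \eref{discr_relat} and define $\psi,\varphi$ on each $\mathcal{I}_j$ by \eref{eq:psi(x)VSpsi(k)}. Three things must be checked. (i) On $\mathcal{I}_j$ the functions solve \eref{eq:psi&phi}, which is immediate since each is a combination of $\rme^{\pm\rmi Ax}\rme^{\pm\rmi kx}$. (ii) Condition \eref{discOperCoupl1} holds: evaluating \eref{eq:psi(x)VSpsi(k)} at the two endpoints of a cell shows that the one-sided limits of $\psi$ and of $\varphi$ at the vertex $j\pi$, taken from either $\mathcal{I}_{j-1}$ or $\mathcal{I}_j$, all equal $\psi_j(k)$, so the pieces match continuously. (iii) The jump condition \eref{discOperCoupl2} holds; this is the one genuine calculation and the main obstacle. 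One differentiates \eref{eq:psi(x)VSpsi(k)}, forms the combination $\mathcal{D}\psi_j(j\pi)+\mathcal{D}\varphi_j(j\pi)-\mathcal{D}\psi_{j-1}(j\pi)-\mathcal{D}\varphi_{j-1}(j\pi)$, and simplifies it with the elementary identities for $\sin k\pi,\cos k\pi$ until, upon substituting $\psi_{j+1}(k)+\psi_{j-1}(k)=\xi_j(k)\psi_j(k)$, it collapses to $\alpha_j\psi_j(k)$. Equivalently --- and this is how I would actually organize it --- since every identity used in the first assertion is reversible, one reads the implications $N_j\Phi_j=\Phi_{j+1}$, hence $S_j\Psi_j=\Psi_{j+1}$, hence \eref{discOperCoupl2}, backwards, noting that \eref{eq:psi(x)VSpsi(k)} is exactly the function realizing \eref{eq:psi}, \eref{eq:phi} with the prescribed vertex values $\psi_j(k)$.

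For the third assertion, fix $k$ and put $x=j\pi+t$. Formula \eref{eq:psi(x)VSpsi(k)} then displays the restriction of $(\psi,\varphi)$ to the $j$th cell as the image of $(\psi_j(k),\psi_{j+1}(k))\in\mathbb{C}^2$ under one and the same linear map $L\colon\mathbb{C}^2\to L^p(0,\pi)\oplus L^p(0,\pi)$, the dependence on $j$ entering only through those two numbers. Since $A\in\mathbb{R}$, $\sin k\pi\neq0$, and $|\cos kt|,|\sin kt|$ are bounded on $[0,\pi]$, the map $L$ is bounded; it is injective because the one-sided limits of \eref{eq:psi(x)VSpsi(k)} at the endpoints of $\mathcal{I}_j$ recover $\psi_j(k)$ and $\psi_{j+1}(k)$. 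A bounded injective linear map out of a finite-dimensional space is an isomorphism onto its range, so there are constants $0<c\le C$, independent of $j$, with $c\,|v|\le\|Lv\|\le C\,|v|$ for every $v\in\mathbb{C}^2$. Raising these inequalities to the $p$th power and summing over $j\in\mathbb{Z}$ when $p=2$, or taking suprema over $j$ when $p=\infty$, bounds $\|(\psi,\varphi)\|_{L^p(\Gamma)}$ above and below by constant multiples of $\|\{\psi_j(k)\}\|_{\ell_p(\mathbb{Z})}$; in particular one norm is finite exactly when the other is, which is the asserted equivalence and completes the proof.
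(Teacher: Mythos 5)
Your proposal is correct, and on the first two assertions it coincides with the paper: the paper's proof opens with ``it remains to demonstrate the last statements'', i.e.\ it too regards the passage solution $\leftrightarrow$ difference equation as already settled by the chain $S_j(k)\Psi_j(k,0)=\Psi_{j+1}(k,\rmi A\pi)$, $(\cos A\pi)\Phi_j(k)=T(k)\Psi_j(k,0)$, $N_j=TS_jT^{-1}$ derived just above, and your remark that invertibility of $T(k)$ (i.e.\ $\cos A\pi\neq0$ and $\sin k\pi\neq0$, so $A-\frac12\notin\mathbb{Z}$ and $k\notin\mathbb{Z}$) is what makes every step reversible is exactly the right observation. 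Where you genuinely diverge is the $L^p$--$\ell_p$ equivalence, the only part the paper proves in detail. The paper works with the quasiderivative: from $\psi,\mathcal{D}^2\psi\in L^p$ it infers $\mathcal{D}\psi\in L^p$, and then uses explicit pointwise identities on each cell --- a formula recovering $\psi(j\pi,k)$ from the data at an arbitrary interior point for $p=\infty$, and the cell-wise invariance (up to the phase $\rme^{2\rmi A(x-j\pi)}$) of $(\psi(x,k))^2+k^{-2}(\mathcal{D}\psi(x,k))^2$ for $p=2$, stated for real-valued solutions. You instead note that with $t=x-j\pi$ the restriction of $(\psi,\varphi)$ to the $j$th cell is the image of $(\psi_j(k),\psi_{j+1}(k))\in\mathbb{C}^2$ under one fixed, $j$-independent, bounded and injective linear map into $L^p(0,\pi)\oplus L^p(0,\pi)$, hence is squeezed between $j$-uniform constant multiples of $|(\psi_j,\psi_{j+1})|$, after which summing (for $p=2$) or taking suprema (for $p=\infty$) finishes the argument. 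This softer finite-dimensionality route buys a shorter proof that needs neither the step $\psi,\psi''\in L^p\Rightarrow\psi'\in L^p$ nor the real-valuedness assumption, and it treats $k>0$ and $k=\rmi\kappa$ on the same footing; the paper's computation, in exchange, yields exact cell-wise identities rather than unspecified comparability constants. The one point worth making explicit in your version is that in the forward direction the restriction of a given solution to $\mathcal{I}_j$ really is given by \eref{eq:psi(x)VSpsi(k)}: since $\sin k\pi\neq0$, the two endpoint values $\psi_j(k),\psi_{j+1}(k)$ determine the solution on the cell uniquely, so the isomorphism argument applies in both directions.
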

\begin{proof}
It remains to demonstrate the last statements. Let $k^2\in\mathbb{R}$, $k\in\mathfrak{K}$, and assume that all the solutions $\psi(\cdot, k)$, $\varphi(\cdot, k)$, and $\psi_j( k)$ are real-valued. If $\psi,\varphi\in L^p(\mathbb{R})$, and thus $\mathcal{D}^2\psi,\mathcal{D}^2\varphi\in L^p(\mathbb{R})$, we infer that $\mathcal{D}\psi,\mathcal{D}\varphi\in L^p(\mathbb{R})$ holds for all $1\leq p\leq\infty$. Then $\{\psi_j(k)\}_{j\in\mathbb{Z}}\in \ell_p(\mathbb{Z})$ follows from
\[\fl
\psi(j\pi, k)=\rme^{\rmi A(x-j\pi)}
\bigg(
\psi(x,k)\cos k(x-j\pi)-\mathcal{D}\psi(x, k)\,
\frac{\sin k(x-j\pi)}k
\bigg)
\,,\qquad x\in\mathcal{I}_j\,,
\]
for $p=\infty$ and from
\begin{eqnarray*}
&(\psi(j\pi,k))^2+\frac1{k^2}(\mathcal{D}\psi((j\pi)^+,k))^2
\\
&\qquad
=\rme^{2\rmi A(x-j\pi)}
\bigg(
(\psi(x,k))^2+\frac1{k^2}
(\mathcal{D}\psi(x,k))^2
\bigg)\,,\qquad x\in\mathcal{I}_j\,,
\end{eqnarray*}
for $p=2$. Conversely, assume $\{\psi_j( k)\}_{j\in\mathbb{Z}}\in \ell_p(\mathbb{Z})$ for $p=\infty$ or $p=2$. The case $p=\infty$ directly results from \eref{eq:psi(x)VSpsi(k)} and the case $p=2$ follows from \eref{eq:psi(x)VSpsi(k)} and
\begin{eqnarray*}
&\rme^{2\rmi A(x-j\pi)}
\bigg(
(\psi(x,k))^2+\frac1{k^2}
(\mathcal{D}\psi(x,k))^2
\bigg)
\\
&\qquad
=
(\psi_j(k))^2+
    \bigg(
        \frac{\psi_{j+1}(k)\rme^{\rmi A\pi}-\psi_j(k)\cos k\pi}{\sin k\pi}
    \bigg)^2
\,,\qquad x\in\mathcal{I}_j\,,
\end{eqnarray*}
which completes the proof of the theorem.
\end{proof}


\section{Systems with local impurities}
\label{sec:mcoupl}

Let the Hamiltonian of the periodic system $-\Delta_\alpha$ be defined as in previous sections, and choose $\{\pi,2\pi,\ldots,m\pi\}$ as the set of coordinate values at which the vertex coupling suffers a perturbation that models an impurity changing the $\delta$-interaction coupling strength
by $\gamma_j$, $j\in\mathbb{M}:=\{1,2,\ldots,m\}$. The perturbed Hamiltonian will be then denoted by $-\Delta_{\balpha+\bgamma}$. Our goal here is to relate spectral properties of the two operators. Since the resolvents of $-\Delta_{\balpha+\bgamma}$ and $-\Delta_\alpha$ differ by a finite-rank operator the essential spectra of the two operators coincide, of course. However, the perturbation may give rise to eigenvalues in the gaps and we are going to derive a characteristic equation giving this discrete spectrum. It will contain a determinant given in terms of a recurrent expression, which makes its analysis in the general case a rather challenging task. Because of this complexity, we shall analyze in detail a few particular cases, starting with $m=1$ and $m=2$, followed by perturbation with an arbitrary natural $m$ but with all the impurities identical, in other words, $\gamma_1=\ldots=\gamma_m$.

As we have shown in the previous section, wave functions in the neighboring rings are related by the matrix $N_j$ as follows,
\[
\Phi_{j+1}(k)
=
N_j(k)
\Phi_j(k)
\,,\qquad j\in\mathbb{Z}\,.
\]
Since these matrices are the same outside the perturbation support, we have
\begin{eqnarray}
        \label{cond:efViaPhiM+1}
\Phi_{m+j+1}(k) &= (N(k))^j\Phi_{m+1}(k)\,,
    \qquad &j \in\mathbb{N}\,, \\
        \label{cond:efViaPhiJ}
\Phi_{j+1}(k) &= N_j(k)\Phi_{j}(k)\,,
    \qquad &j \in\mathbb{M}\,, \\
        \label{cond:efViaPhi1}
\Phi_j(k)     &= (N(k))^{j-1}\Phi_1(k)\,,
    \qquad -&j\in\mathbb{N}_0\,,
\end{eqnarray}
where the matrix $N$ corresponds to the function
\[
\xi(k)=\frac{1}{\cos A\pi}
\Big(
	\cos k\pi
		+
	\frac{\alpha}{4k}\sin k\pi
\Big)
\]
in the same way as the matrix $N_j$ is constructed using $\xi_j$. It is obvious that the asymptotical behavior of the norms of $\Phi_j$ is determined by the spectral properties of the matrix $N$. Specifically, let $\Phi_{m+1}$ be an eigenvector of $N$ corresponding to an eigenvalue $\mu$, then $|\mu| < 1$ ($|\mu|>1$, $|\mu|=1$) means that the norm of $\Phi_j$ decays exponentially with respect to $j>m$ (respectively, it is exponentially growing, or independent of $j$). At the same time if $\Phi_1$ is an eigenvector of $N$ corresponding to an eigenvalue $\mu$ such that $|\mu| > 1$, then the norm of $\Phi_j$ decays exponentially with respect to negative $j$ (with a similar conclusions for $|\mu|<1$ and $|\mu|=1$).

By virtue of Theorem~\ref{thm:Duality} the wave function components on the $j$-th ring are determined by $\Phi_j$, and thus, in view of \eref{cond:efViaPhiM+1} and \eref{cond:efViaPhi1}, by $\Phi_{m+1}$ or $\Phi_1$ depending on the sign of $j$. If $\Phi_{m+1}$ has a non-vanishing component related to an eigenvalue of $N$ of modulus larger than one, or $\Phi_1$ has a non-vanishing component related to an eigenvalue of modulus less than one, then the corresponding coefficients $\Phi_j$ determine neither an eigenfunction nor a generalized eigenfunction of $-\Delta_{\balpha+\bgamma}$. On the other hand, if $\Phi_{m+1}$ is an eigenvector, or a linear combination of eigenvectors, of the matrix $N$ with modulus less than one (respectively, equal to one), and at the same time $\Phi_1$ is an eigenvector, or a linear combination of eigenvectors, of the matrix $N$ with modulus larger than one (respectively, equal to one), then the coefficients $\Phi_j$ determine an eigenfunction (respectively, a generalized eigenfunction) and the corresponding energy $E$ belongs to the point (respectively, continuous) spectrum of the operator $-\Delta_{\balpha+\bgamma}$. To perform the spectral analysis of $N(k)$, we employ its characteristic polynomial at energy $k^2$,
\begin{equation}\label{charPoly}
\lambda^2-2\xi(k)\lambda+1\,;
\end{equation}
it shows that $N(k)$ has an eigenvalue of modulus less than one \emph{iff} the discriminant of \eref{charPoly} is positive, i.e.
\[
|\xi(k)|>1\,,
\]
and a pair of complex conjugated eigenvalues of modulus one \emph{iff} the above quantity is less or equal to one. In the former case the eigenvalues of $N(k)$ are given by
\begin{equation}\label{eq:eigenvalues}
\lambda_{1,2}(k)=
\xi(k)
\pm\sqrt{\xi(k)^2-1}\,,
\end{equation}
satisfying $\lambda_2=\lambda_1^{-1}$, hence $|\lambda_2|<1$ holds if $\xi(k)>1$ and $|\lambda_1|<1$ if this quantity is less than $-1$. Moreover, the corresponding eigenvectors of $N(k)$ are
\begin{equation}\label{eigenFunc}
    u_{1,2}(k) = \left(
    \begin{array}{c}1\\[.5em] \lambda_{2,1}(k)\end{array}\right)\,.
\end{equation}
It is convenient to define the function
\[
\lambda(k)=
\xi(k)-\sgn(\xi(k))\sqrt{\xi(k)^2-1}\,,
\]
which coincides with $\lambda_1(k)$ if $\xi(k)<-1$ or with $\lambda_2(k)$ if $\xi(k)>1$, hence the modulus of $\lambda(k)$ is smaller than one unless $k^2\in\sigma(-\Delta_{\alpha})$. Abbreviating
\begin{eqnarray*}
\hspace{-3em} P_0(k)&=1\,,\qquad P_1(k)&=2\xi_1(k)\,,
\qquad P_m(k)=2\xi_m(k)P_{m-1}(k)-P_{m-2}(k)\,,
\\
\hspace{-3em} Q_0(k)&=0\,,\qquad Q_1(k)&=1\,,\qquad\qquad  Q_m(k)=2\xi_m(k)Q_{m-1}(k)-Q_{m-2}(k)\,,
\end{eqnarray*}
we arrive at the following conclusion.

\begin{prop}\label{pro:EVCondition}
Assume that $k^2\in\mathbb{R}\setminus\sigma(-\Delta_\alpha)$; then $k^2$ is an eigenvalue of $-\Delta_{\balpha+\bgamma}$ \emph{iff} for this $k$ we have
\begin{equation}\label{cond:spectrumXi}
Q_{m-1}(k)\lambda(k)^2-
(P_{m-1}(k)+Q_{m}(k))
\lambda(k)+P_{m}(k)
=0\,.
\end{equation}
\end{prop}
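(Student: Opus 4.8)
The plan is to use the duality of Theorem~\ref{thm:Duality} to replace the eigenvalue problem for $-\Delta_{\balpha+\bgamma}$ by the existence of a nontrivial $\ell_2(\mathbb Z)$ solution of the transfer recursion $\Phi_{j+1}=N_j(k)\Phi_j$, and then to match the exponentially decaying tails at $j\to\pm\infty$ across the finite block of matrices $N_1(k),\dots,N_m(k)$ attached to the impurities.

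First I would fix the spectral picture of the constant matrix $N=N(k)$ governing the unperturbed part. Since $k^2\in\mathbb R\setminus\sigma(-\Delta_\alpha)$, Proposition~\ref{prop:spectr} (and the remark on $E=0$ preceding it) yields $|\xi(k)|>1$, so by \eref{eq:eigenvalues} the matrix $N$ has two simple real eigenvalues, $\lambda(k)$ with $|\lambda(k)|<1$ and $\lambda(k)^{-1}$ with $|\lambda(k)^{-1}|>1$; by \eref{eigenFunc} one may take $(\lambda(k),1)^{\top}$ as an eigenvector for $\lambda(k)$ and $(1,\lambda(k))^{\top}$ as an eigenvector for $\lambda(k)^{-1}$, and these are linearly independent because $\lambda(k)^2\neq1$. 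Expanding $\Phi_{m+1}$ and $\Phi_1$ in this eigenbasis and using \eref{cond:efViaPhiM+1} and \eref{cond:efViaPhi1}, I would argue --- along the lines of the paragraph preceding the proposition --- that a solution $\{\Phi_j\}_{j\in\mathbb Z}$ is square summable if and only if $\Phi_{m+1}$ is a scalar multiple of $(\lambda(k),1)^{\top}$ and $\Phi_1$ is a scalar multiple of $(1,\lambda(k))^{\top}$: a component of $\Phi_{m+1}$ along $(1,\lambda(k))^{\top}$ grows like $|\lambda(k)|^{-j}$ as $j\to+\infty$ and a component of $\Phi_1$ along $(\lambda(k),1)^{\top}$ grows like $|\lambda(k)|^{\,j}$ as $j\to-\infty$, while the finitely many indices $2,\dots,m$ are irrelevant for square summability. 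By Theorem~\ref{thm:Duality} applied to the sequence $\balpha+\bgamma$, such a nonzero sequence corresponds precisely to an $L^2(\Gamma)$ eigenfunction of $-\Delta_{\balpha+\bgamma}$, and since $\det N_j(k)=\det N(k)=1$ all matrices in play are invertible, so a solution with $\Phi_1\neq0$ is nonzero everywhere.

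It remains to perform the matching. The two ends are related by $\Phi_{m+1}=\mathcal M(k)\Phi_1$ with $\mathcal M(k):=N_m(k)\cdots N_1(k)$, and I would prove by induction on $j$, using the three-term recursions defining $P_j$ and $Q_j$, that $N_j(k)\cdots N_1(k)$ is the matrix whose first row is $(P_j(k),-Q_j(k))$ and whose second row is $(P_{j-1}(k),-Q_{j-1}(k))$; the case $j=1$ is $N_1(k)$ itself, and the inductive step is a single matrix multiplication. Inserting $\Phi_1=(1,\lambda(k))^{\top}$ then gives $\Phi_{m+1}=(P_m(k)-Q_m(k)\lambda(k),\,P_{m-1}(k)-Q_{m-1}(k)\lambda(k))^{\top}$, and the requirement that this vector be proportional to $(\lambda(k),1)^{\top}$, i.e.\ that the corresponding $2\times2$ determinant vanish, is exactly \eref{cond:spectrumXi}. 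For the converse one checks that whenever \eref{cond:spectrumXi} holds the displayed vector cannot vanish, since $\mathcal M(k)$ has determinant one and $(1,\lambda(k))^{\top}\neq0$; hence choosing $\Phi_1=(1,\lambda(k))^{\top}$ and propagating it by the $N_j(k)$ and $N(k)$ produces a genuine nonzero $\ell_2$ solution, equivalently an eigenfunction of $-\Delta_{\balpha+\bgamma}$ at energy $k^2$. I do not expect a serious obstacle: the only points needing care are the index bookkeeping in the induction for $\mathcal M(k)$ and the observation that the two eigendirections fail to be transversal only at band edges, which are excluded by the hypothesis $k^2\notin\sigma(-\Delta_\alpha)$ --- this transversality is what turns the matching into a single scalar equation rather than something automatically (in)solvable.
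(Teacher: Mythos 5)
Your proposal is correct and follows essentially the same route as the paper: reduce via the duality of Theorem~\ref{thm:Duality} to the transfer recursion, demand that $\Phi_{m+1}$ and $\Phi_1$ lie in the contracting, respectively expanding, eigendirections of $N(k)$, express $\mathcal N_m=N_m\cdots N_1$ through the recursively defined $P_j,Q_j$, and read off \eref{cond:spectrumXi} as the vanishing of the corresponding $2\times2$ determinant. The only differences are cosmetic: you treat the cases $\xi(k)<-1$ and $\xi(k)>1$ simultaneously via $\lambda(k)$ and spell out the converse and the induction for $\mathcal N_m$, which the paper leaves implicit.
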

\begin{proof}
Suppose first that $\xi(k)<-1$, so that $|\lambda_1(k)|<1$ and $|\lambda_2(k)|>1$; then by virtue of Theorem~\ref{thm:Duality} and formul{\ae}~\eref{cond:efViaPhiM+1}, \eref{cond:efViaPhi1} the only possibility to construct an eigenfunction of $-\Delta_{\balpha+\bgamma}$ is by demanding that
\[
\Phi_{m+1}\sim
u_1\,,\qquad
\Phi_1\sim
u_2\,.
\]
Thus condition \eref{cond:efViaPhiJ} implies
\begin{equation}\label{cond:spectrumXi<-1New}
\det\{N_m(k)\ldots N_1(k)u_2(k),u_1(k)\}=0\,.
\end{equation}
Observe that the product
\[
\mathcal{N}_m(k):=\prod_{j=1}^mN_j(k)\,,\quad m\in\mathbb{N}\,,
\]
can be recursively expressed in terms of quasi-polynomials $P_m$ and $Q_m$,
\[
\mathcal{N}_m(k)=
\left(\begin{array}{cc}
P_m(k)&-Q_m(k)
\\[.5em]
P_{m-1}(k)&-Q_{m-1}(k)
\end{array}\right)
\,,
\quad m\in\mathbb{N}\,.
\]
It is worth mentioning here that $\det N_j=1$, hence we have also $\det\mathcal{N}_m=1$. This fact in combination with formula \eref{cond:spectrumXi<-1New} implies the spectral condition \eref{cond:spectrumXi}. The case $\xi(k)>1$ may be handled in the same manner.
\end{proof}

\subsection{Example: a single impurity}

As indicated, we will discuss in detail a few examples. The simplest one concerns the situation with a single impurity, $m=1$. In this case $\mathcal{N}_1(k)=N_1(k)$, and the characteristic equation of Proposition~\ref{pro:EVCondition}, which is a quadratic polynomial with respect to $\lambda$, is reduced to a linear equation giving thus the following condition for eigenvalue existence
\begin{equation}\label{m=1}
\gamma_1=-\sgn(\xi(k))\frac{4k\cos A\pi}{\sin k\pi}
\sqrt{\xi(k)^2-1}\,.
\end{equation}
Let $f$ stand for the right-hand side of relation~\eref{m=1} and suppose that $A\notin\mathbb{Z}$. Then, as $k^2$ varies from the lower end of a gap in $\sigma(-\Delta_\alpha)$ to the upper end, $f(k)$ is continuous with respect to $k$ and strictly increasing with respect to $k^2$. In particular, $f(k)$ alternately increases from $-\infty$ to zero or from zero to $\infty$, starting with the increase from $-\infty$ to zero in the first gap (the one below the continuum spectrum threshold). This behavior is illustrated in Fig.~\ref{fig:m=1}, in view of \eref{m=1} the plots also show the dependence of the eigenvalues in the gaps on the perturbation parameter $\gamma_1$.

\begin{figure}[h!]
\centering
    \includegraphics[scale=0.33]{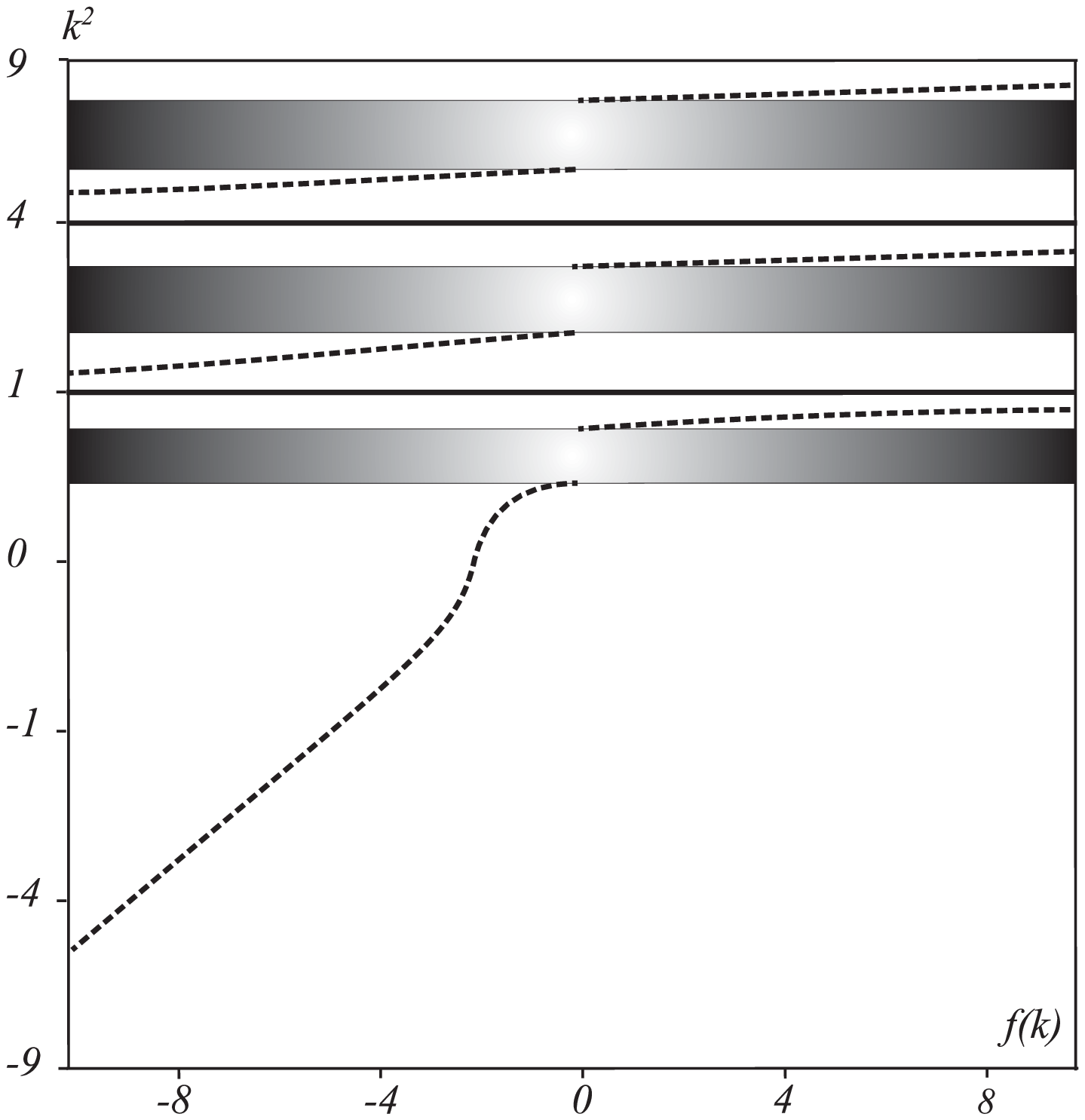}
    \includegraphics[scale=0.33]{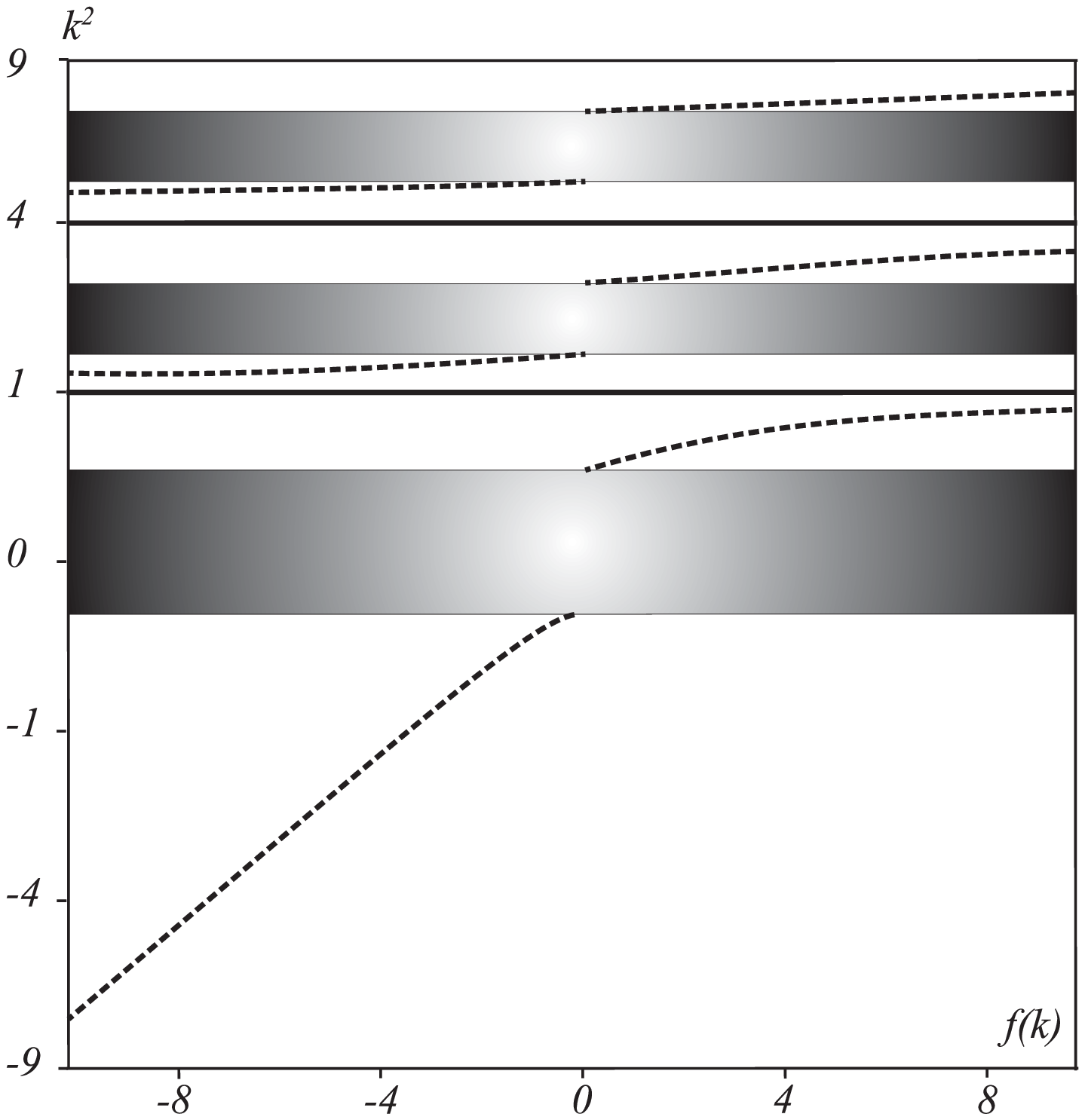}
    \includegraphics[scale=0.33]{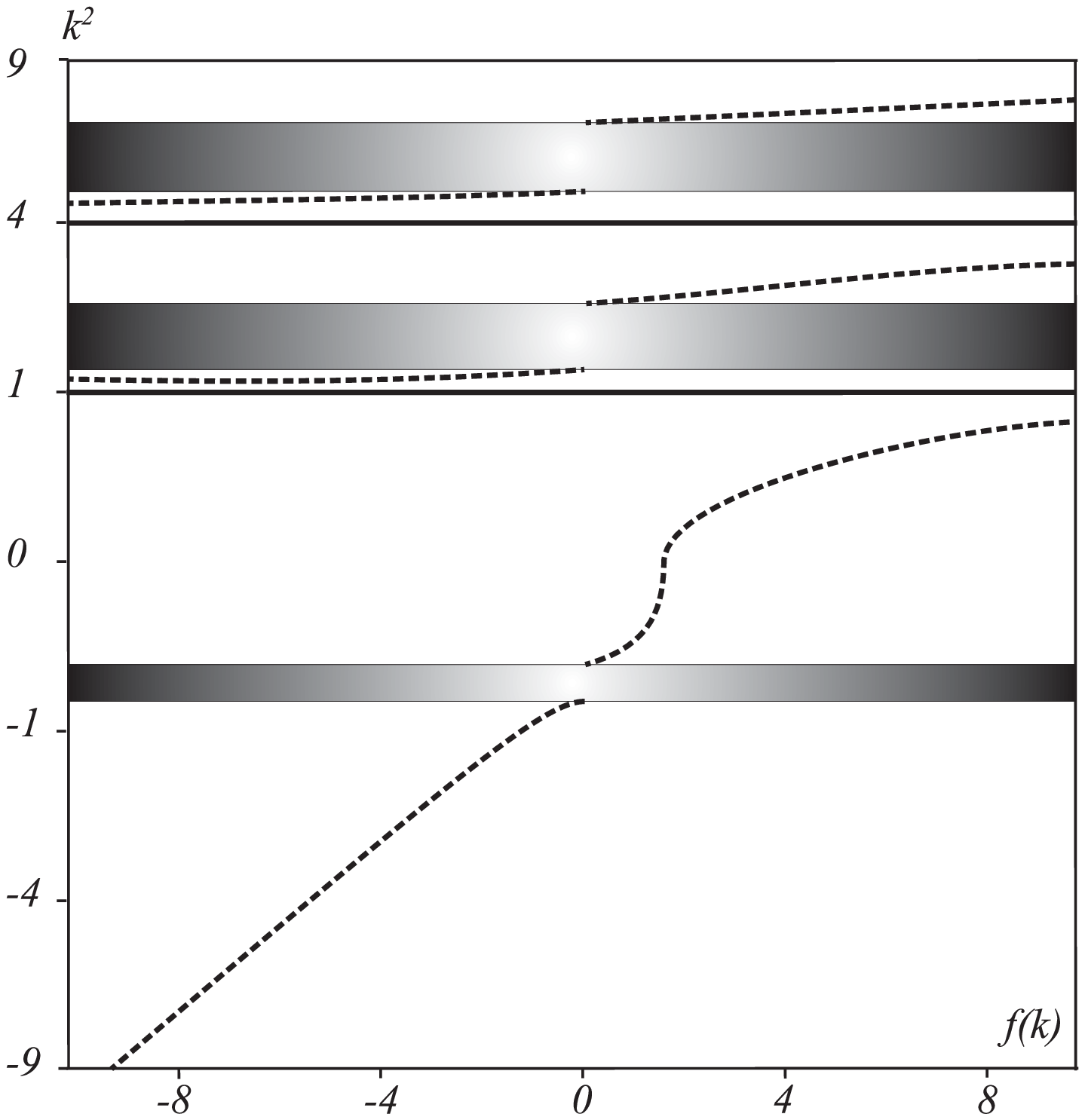}
  \caption{The dependence of $k^2$ on $f(k)$ for $\cos A\pi=0.6$ and the perturbation coupling strength (i) $\alpha=1$, (ii) $\alpha=-1$, (iii) $\alpha=-3$.}\label{fig:m=1}
\end{figure}

\noindent Suppose next that $A$ approaches an integer number and $\alpha$ is positive. As we have said at the end of Sec.~\ref{sec:period}, the spectral gaps in the positive spectrum, where $f(k)$ is positive, disappear in the limit. Thus if $A\in\mathbb{Z}$ and $\alpha>0$, then $f(k)$ increases from $-\infty$ to zero in every spectral gap. Likewise, if $A$ approximates an integer and $\alpha<0$, then we should `exclude' those  spectral gaps in the positive spectrum, where $f(k)$ is negative. Hence for $A\in\mathbb{Z}$ and $\alpha<0$, $f(k)$ increases from $-\infty$ to zero in the  first gap and from zero to $\infty$ in every gap starting from the second one.

Denote by $-\Delta_{\balpha+\bgamma^{(1)}},\, \bgamma^{(1)}=\{\dots,0,\gamma_1,0,\dots\}$, the Hamiltonian corresponding to the system with one vertex perturbation of the strength $\gamma_1$; then the above considerations lead us to the following conclusion.

\begin{thm}[Magnetic case]
Assume that $A\notin\mathbb{Z}$. The essential spectrum of $-\Delta_{\balpha+\bgamma^{(1)}}$ coincides with the spectrum of $-\Delta_{\alpha}$.
For $\gamma_1<0$, the operator $-\Delta_{\alpha,\gamma_1}$ has precisely one simple impurity state in every odd gap of its essential spectrum (starting from the first one). On the other hand, for $\gamma_1>0$ it has precisely one simple impurity state in every even gap of its essential spectrum.
\end{thm}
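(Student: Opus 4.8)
The plan has three parts: pin down the essential spectrum, reduce the eigenvalue problem in the gaps to the scalar equation \eref{m=1}, and then count solutions and verify simplicity. (Throughout one should assume, as in the preceding discussion, that $A-\frac12\notin\mathbb{Z}$, so that Proposition~\ref{prop:spectr} applies and $\sigma(-\Delta_\alpha)$ has the band structure of Theorem~\ref{thm:spectrumUnperturbed}.) For the first part, note that $-\Delta_{\balpha+\bgamma^{(1)}}$ and $-\Delta_\alpha$ differ only through the coupling constant at the single vertex $x=\pi$, so by the Krein-type resolvent formula for $\delta$-couplings their resolvents differ by a rank-one operator; Weyl's theorem then gives $\sigma_{\mathrm{ess}}(-\Delta_{\balpha+\bgamma^{(1)}})=\sigma_{\mathrm{ess}}(-\Delta_\alpha)$, and since every point of $\sigma(-\Delta_\alpha)$ lies in an absolutely continuous band or is an eigenvalue of infinite multiplicity, $\sigma_{\mathrm{ess}}(-\Delta_\alpha)=\sigma(-\Delta_\alpha)$, whence the essential spectrum of $-\Delta_{\balpha+\bgamma^{(1)}}$ coincides with $\sigma(-\Delta_\alpha)$.

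For the second part I would specialise Proposition~\ref{pro:EVCondition} to $m=1$: then $P_0=1$, $P_1=2\xi_1$, $Q_0=0$, $Q_1=1$, so the quadratic \eref{cond:spectrumXi} collapses to the linear relation $\lambda(k)=\xi_1(k)$, and inserting $\xi_1(k)=\xi(k)+\gamma_1\sin k\pi/(4k\cos A\pi)$ together with the definition of $\lambda(k)$ turns this into exactly \eref{m=1}. Hence, for $k^2\in\mathbb{R}\setminus\sigma(-\Delta_\alpha)$, the number $k^2$ is an eigenvalue of $-\Delta_{\balpha+\bgamma^{(1)}}$ precisely when $\gamma_1=f(k)$, where $f$ is the right-hand side of \eref{m=1}. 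The decisive point is then the behaviour of $f$ across a fixed gap. Since $A\notin\mathbb{Z}$ the bands lie strictly inside the intervals $(n^2,(n+1)^2)$, so the connected components of $\mathbb{R}\setminus\sigma(-\Delta_\alpha)$ are the half-line below the first band (gap $1$) and, for each $n\ge1$, the two intervals flanking the $(n{+}1)$-st band; listing them in increasing order, every odd gap has a band edge as its upper endpoint and every even gap has a band edge as its lower endpoint. On the interior of a gap none of $\sgn\xi(k)$, $\sin k\pi$, $\cos A\pi$, $\xi(k)$ vanishes, so $f$ is continuous there. At a band edge $|\xi(k)|=1$, so $\sqrt{\xi(k)^2-1}\to0$ while $4k\cos A\pi/\sin k\pi$ stays bounded (no band edge is at a point $k\in\mathbb{N}$), hence $f\to0$; at an endpoint $k=m\in\mathbb{N}$ one has $|\xi(m)|=1/|\cos A\pi|>1$ and $\sin k\pi\to0$, hence $|f(k)|\to\infty$; and on gap $1$ the substitution $k=\rmi\kappa$ with $\kappa\to\infty$ gives $f(\rmi\kappa)\sim-4\kappa\to-\infty$. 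Granting that $f$ is strictly increasing with respect to $k^2$ on each gap, these limits force $f$ to map every odd gap increasingly and bijectively onto $(-\infty,0)$ and every even gap onto $(0,\infty)$. Consequently $\gamma_1=f(k)$ has exactly one solution $k^2$ in every odd gap when $\gamma_1<0$ and none when $\gamma_1>0$, and exactly one in every even gap when $\gamma_1>0$ and none when $\gamma_1<0$, which is the asserted counting.

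It remains to see that each such eigenvalue is simple. For $k^2$ in a gap one has $|\xi(k)|>1$, so $N(k)$ possesses a contracting and an expanding real eigenvalue, each with a one-dimensional eigenspace spanned by one of the vectors $u_{1,2}(k)$ of \eref{eigenFunc}. By Theorem~\ref{thm:Duality} an $L^2(\Gamma)$ eigenfunction of $-\Delta_{\balpha+\bgamma^{(1)}}$ corresponds to an $\ell_2(\mathbb{Z})$ solution $\{\psi_j(k)\}$ of \eref{discr_relat}; by \eref{cond:efViaPhiM+1} and \eref{cond:efViaPhi1}, square summability forces $\Phi_2(k)=N_1(k)\Phi_1(k)$ to be proportional to the eigenvector of $N(k)$ with eigenvalue of modulus less than one and $\Phi_1(k)$ to the one with eigenvalue of modulus greater than one (the roles of $u_1,u_2$ being interchanged according to $\sgn\xi(k)$, exactly as in the proof of Proposition~\ref{pro:EVCondition}). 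These two one-dimensional constraints, linked through $\mathcal{N}_1(k)=N_1(k)$ by \eref{cond:efViaPhiJ}, determine the entire sequence $\{\psi_j(k)\}$ — and hence, via \eref{eq:psi(x)VSpsi(k)}, the eigenfunction itself — up to a single overall scalar. Thus every impurity eigenvalue is simple, and the theorem follows.

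The one step that genuinely requires work is the strict monotonicity of $f$ in $k^2$ on each gap. I would obtain it by direct differentiation: on a gap the signs of $\xi$, $\sin k\pi$ and $\cos A\pi$ are constant, so it suffices to differentiate $f(k)^2=\bigl(4k\cos A\pi/\sin k\pi\bigr)^2\bigl(\xi(k)^2-1\bigr)$ with respect to $k^2$ and to check that the result keeps a fixed sign, using that $\xi$ is monotone in $k^2$ between consecutive critical points and that a gap contains no interior critical point of $\xi$ (which follows from Proposition~\ref{prop:spectr} and the band structure); the negative-energy part of gap $1$ is handled separately by the substitution $k=\rmi\kappa$. A more conceptual alternative is the min-max principle: the quadratic form of $-\Delta_{\balpha+\bgamma^{(1)}}$ changes by $\gamma_1|\psi(\pi)|^2$ and is therefore non-decreasing in $\gamma_1$, so any eigenvalue branch that stays in a fixed gap is non-decreasing in $\gamma_1$, which together with \eref{m=1} — expressing $\gamma_1$ as a function of $k^2$ along such a branch — yields the monotonicity of $f$.
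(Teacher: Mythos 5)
Your proposal is correct and follows essentially the same route as the paper: the essential spectrum is unchanged because the resolvent difference is finite rank, Proposition~\ref{pro:EVCondition} with $m=1$ reduces the eigenvalue condition to $\gamma_1=f(k)$ as in \eref{m=1}, and the counting comes from continuity, constant sign, strict monotonicity in $k^2$ and the endpoint limits of $f$ on each gap (the paper, like you, asserts the monotonicity rather than proving it, so your sketches and the explicit simplicity argument only add detail). One minor slip worth fixing: the gaps are not ``the two intervals flanking the $(n{+}1)$-st band'' for $n\ge1$ --- that list omits the gap between the upper edge of the first band and the infinitely degenerate eigenvalue at $k^2=1$ --- but the parity description you actually use afterwards (odd gaps terminate at a band edge, even gaps start at one, with $f\to0$ there and $|f|\to\infty$ at the points $n^2$) is the correct one, so the conclusion is unaffected.
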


\noindent As before, we single out the case of integer $A$.

\begin{thm}[Non-magnetic case]
Suppose that $A\in\mathbb{Z}$. The essential spectrum of $-\Delta_{\balpha+\bgamma^{(1)}}$ coincides with the unperturbed one. For $\alpha>0$ and $\gamma_1>0$, the operator $-\Delta_{\balpha+\bgamma^{(1)}}$ has no eigenvalues. On the other hand, for $\alpha>0$ and $\gamma_1<0$, it has precisely one simple impurity state in every gap of its essential spectrum. For $\alpha<0$ and $\gamma_1>0$, there is precisely one simple impurity state in every gap except the first one; for $\alpha<0$ and $\gamma_1<0$, there is precisely one simple eigenvalue below the first band and $-\Delta_{\balpha+\bgamma^{(1)}}$ has no other eigenvalues.
\end{thm}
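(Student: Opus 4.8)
The plan is to prove the statement in three steps, most of the groundwork being already in place. First I would settle the essential spectrum: $-\Delta_{\balpha+\bgamma^{(1)}}$ and $-\Delta_\alpha$ differ only in the value of one vertex coupling constant, so their resolvents differ by a finite-rank operator, and Weyl's theorem gives $\sigma_{\mathrm{ess}}(-\Delta_{\balpha+\bgamma^{(1)}})=\sigma_{\mathrm{ess}}(-\Delta_\alpha)=\sigma(-\Delta_\alpha)$; the last equality holds because for $A\in\mathbb{Z}$ the periodic spectrum is the union of the absolutely continuous bands with the infinitely degenerate eigenvalues at the points $n^2$, all of which are essential. (Those eigenvalues survive the perturbation too, being carried by the explicit eigenfunctions supported on pairs of adjacent circles disjoint from the impurity; the new feature, which the rest of the theorem describes, is discrete spectrum in the gaps.)

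Next I would turn the eigenvalue question into a counting problem. By Proposition~\ref{pro:EVCondition} with $m=1$, a point $k^2\in\mathbb{R}\setminus\sigma(-\Delta_\alpha)$ with $k\in\mathfrak{K}$ is an eigenvalue of $-\Delta_{\balpha+\bgamma^{(1)}}$ exactly when $\gamma_1=f(k)$, $f$ being the right-hand side of \eref{m=1}; and any such eigenvalue is simple. Simplicity follows because there $|\xi(k)|>1$, so $N(k)$ has two distinct eigenvalues with $\lambda_1\lambda_2=1$ and one-dimensional eigenspaces; the decay requirements at $\pm\infty$ force $\Phi_2$ and $\Phi_1$ into the appropriate eigenspaces, and the matching relation \eref{cond:efViaPhiJ} then fixes the corresponding solution up to a single scalar. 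Thus eigenvalues in a gap correspond bijectively to solutions of $\gamma_1=f(k)$ in that gap.

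Finally I would invoke the behaviour of $f$ on the gaps, established above for $A\in\mathbb{Z}$: when $\alpha>0$, $f$ is continuous and strictly increasing in $k^2$, sweeping $(-\infty,0)$ across every gap, while for $\alpha<0$ it sweeps $(-\infty,0)$ across the first (semi-infinite) gap and $(0,\infty)$ across every later one. Reading off the number of solutions of $\gamma_1=f(k)$ gap by gap gives the four assertions: for $\alpha,\gamma_1>0$ there is no solution, hence no eigenvalue; for $\alpha>0>\gamma_1$ there is exactly one in each gap; for $\alpha<0<\gamma_1$ there is exactly one in every gap but the first; and for $\alpha,\gamma_1<0$ there is exactly one below the first band and none elsewhere. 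The hard part is the input to this last step — the behaviour of $f$ near the points $k=n\in\mathbb{N}$ when $|\cos A\pi|=1$, where both the factor $\sqrt{\xi(k)^2-1}$ and the denominator $\sin k\pi$ in \eref{m=1} vanish: a local expansion ($\xi(k)^2-1=O(|k-n|)$, $\sin k\pi=O(|k-n|)$) shows $f$ blows up like a constant times $|k-n|^{-1/2}$, and one must check that this constant is negative for gaps adjoining $n^2$ from the right (all gaps when $\alpha>0$) and positive for those adjoining $(n+1)^2$ from the left (the gaps above the first when $\alpha<0$), the first gap being treated separately via the substitution $k=\rmi\kappa$, $\kappa\to\infty$. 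Together with the strict monotonicity of $f$ in $k^2$ on each gap — which propagates from the magnetic case by continuity in $A$ — this pins down the endpoint values, and the remaining bookkeeping is routine.
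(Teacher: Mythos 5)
Your proposal is correct and follows the same skeleton as the paper: essential spectrum via the finite-rank resolvent difference, Proposition~\ref{pro:EVCondition} with $m=1$ reducing the eigenvalue problem to $\gamma_1=f(k)$ with $f$ the right-hand side of \eref{m=1}, and then a gap-by-gap count of solutions from the range of $f$. Where you genuinely differ is in how the behaviour of $f$ on the gaps is established for $A\in\mathbb{Z}$. The paper argues by a limiting process: it recalls from the end of Section~\ref{sec:period} that as $|\cos A\pi|\to1$ half of the gaps in each interval $(n^2,(n+1)^2)$ close --- precisely those on which $f$ has the sign opposite to the surviving ones --- so that in the limit $f$ sweeps $(-\infty,0)$ over every gap when $\alpha>0$, and $(-\infty,0)$ over the first gap and $(0,\infty)$ over all later ones when $\alpha<0$. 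You instead compute directly at $|\cos A\pi|=1$: the expansion $\xi(k)^2-1=\OO(|k-n|)$, $\sin k\pi=\OO(|k-n|)$ gives the $|k-n|^{-1/2}$ blow-up of $f$ at the gap edge $k=n$, with the sign of the leading constant (negative at edges the gap adjoins from the right, positive at edges adjoined from the left, i.e. $-\sgn\alpha$ in effect), the value $f=0$ at the interior band edges, and the semi-infinite gap handled through $k=\rmi\kappa$. Your route is more self-contained and verifiable, since the paper's gap-closure argument is somewhat heuristic about which sign of $f$ survives; the paper's route is shorter because it recycles the magnetic-case discussion wholesale. One caveat: your appeal to ``strict monotonicity propagating by continuity in $A$'' is weaker than stated --- a pointwise limit of strictly increasing functions is only non-decreasing. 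This is at the same level of rigour as the paper (which simply asserts monotonicity of $f$), and it is easily repaired either by checking the explicit formula for $f$ directly at $|\cos A\pi|=1$, or by noting that a plateau of $f$ would, via the iff in Proposition~\ref{pro:EVCondition}, produce a continuum of eigenvalues in a gap of the essential spectrum, which is impossible; so it is a presentational gap rather than a substantive one.
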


\subsection{Example: two adjacent impurities}

Consider next the problem of eigenvalue existence in the case $m=2$ and denote the corresponding Hamiltonian by $-\Delta_{\balpha+\bgamma^{(2)}}$ where $\bgamma^{(2)}=\{\dots,0,\gamma_1,\gamma_2,0,\dots\}$. It is natural to assume that $\gamma_1\gamma_2\neq0$, since the opposite case has been already treated, and that $\gamma_1\neq\gamma_2$, since the case of equal impurities will be mentioned separately below, then Proposition~\ref{pro:EVCondition} leads to the a quadratic equation. One can easily solve it to deduce the relations
\begin{equation}\label{m=2}
\gamma_1+\gamma_2
=f(k)-
\frac{4k\cos A\pi}{\sin k\pi}
\Big(\xi(k)
\pm
\sqrt{
1+(\xi_1(k)-\xi_2(k))^2
}\Big)\,.
\end{equation}
Let $f_\pm$ stand for the right-hand side of relations~\eref{m=2} and suppose that $A\notin\mathbb{Z}$. Then, as $k^2$ varies from the lower end of a gap in $\sigma(-\Delta_\alpha)$ to the upper end, both $f_-(k)$ and $f_+(k)$ are continuous with respect to $k$, strictly increasing in with respect to $k^2$ and do not intersect. In particular, $f_-$ ($f_+$) increases from $-\infty$ to some positive (respectively, negative) number in the first spectral gap and then  from another positive (respectively, negative) number to $\infty$ in the second one (cf. Fig.~\ref{fig:m=2});
in the next two gaps the $f_\pm$ switch roles and so on. As in the case of a single impurity half of the spectral gaps is eliminated as an eigenvalue support when $A$ approaches an integer.

\begin{figure}[h!]
\centering
    \includegraphics[scale=0.33]{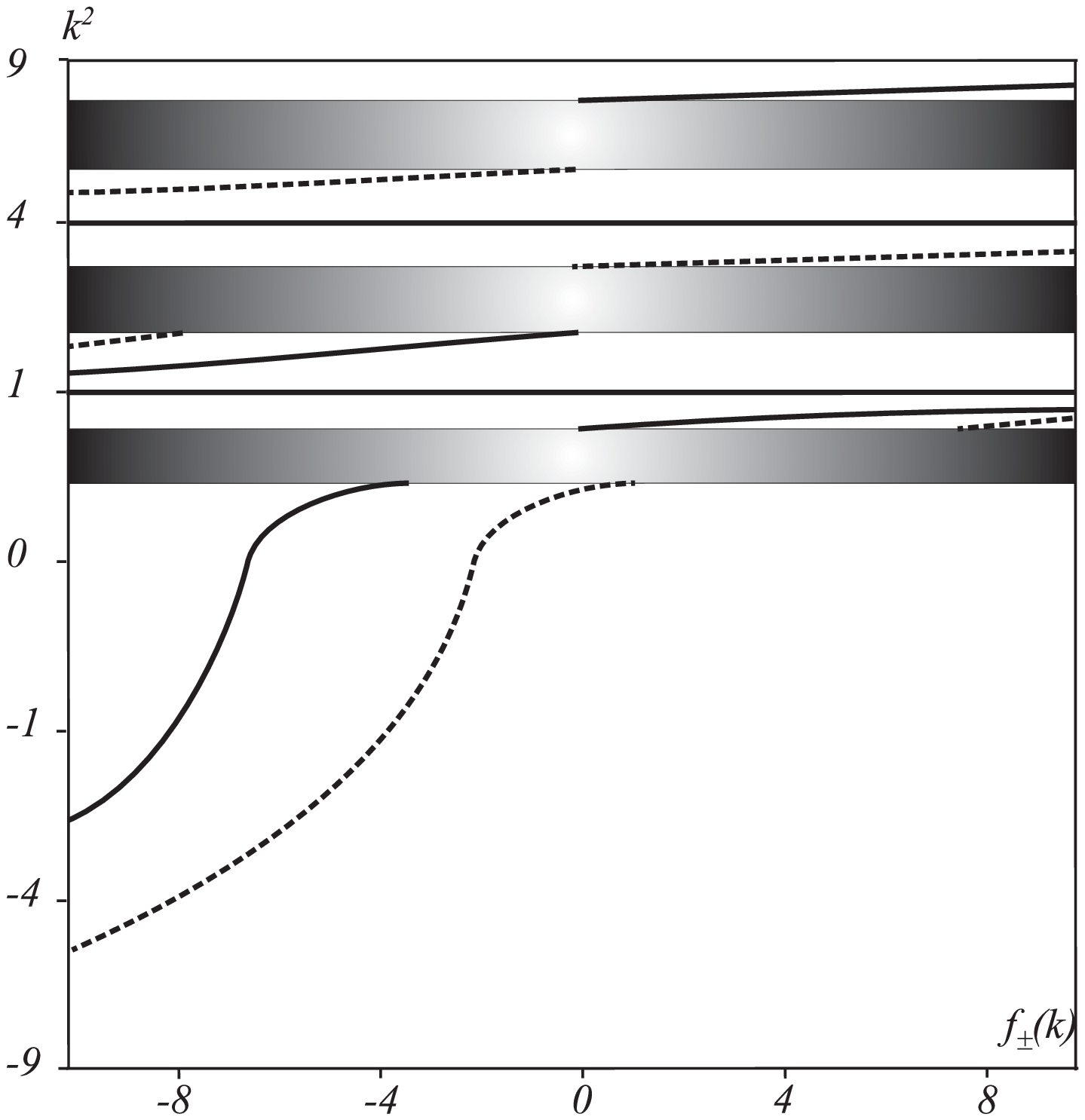}
    \includegraphics[scale=0.33]{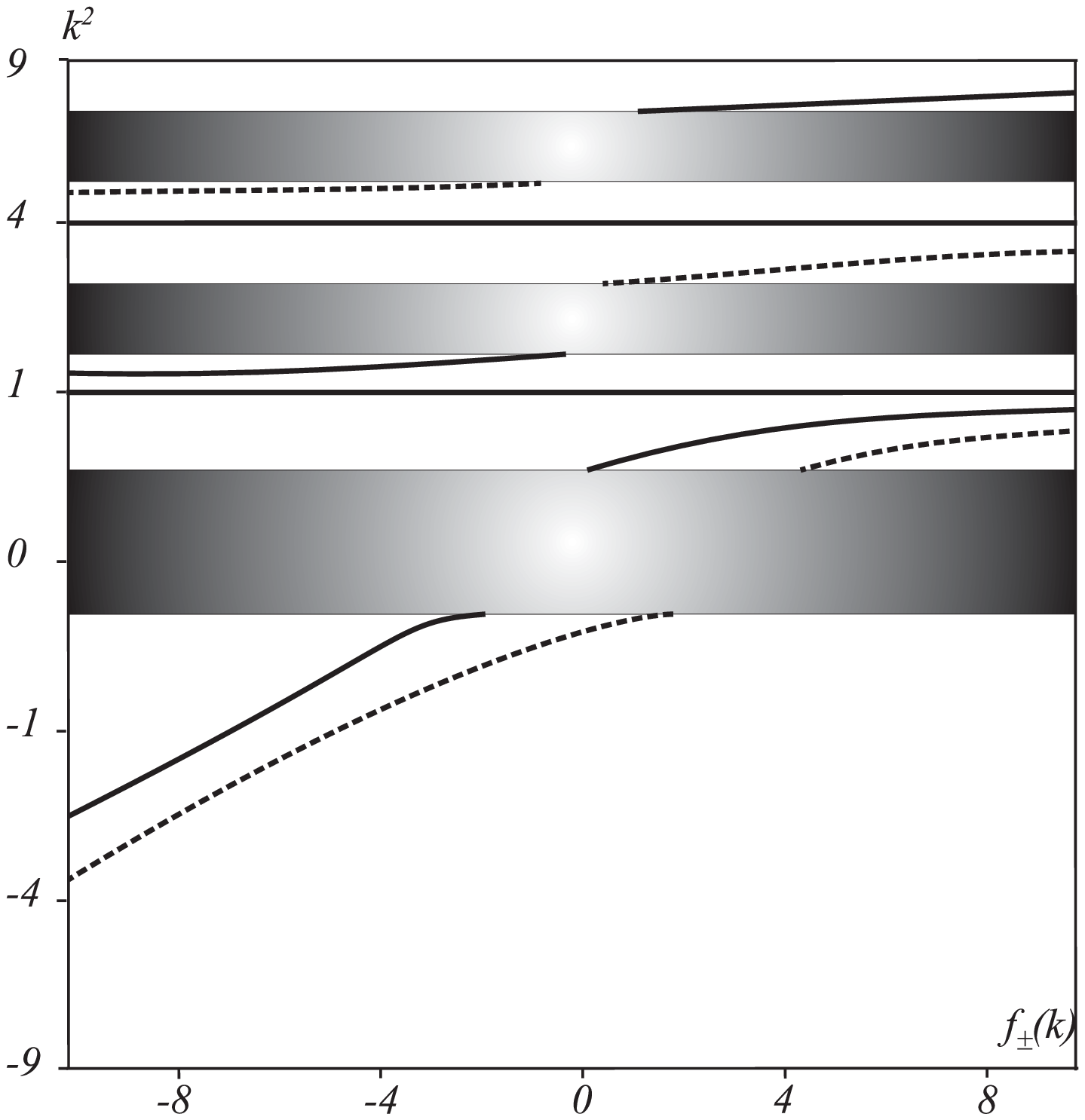}
    \includegraphics[scale=0.33]{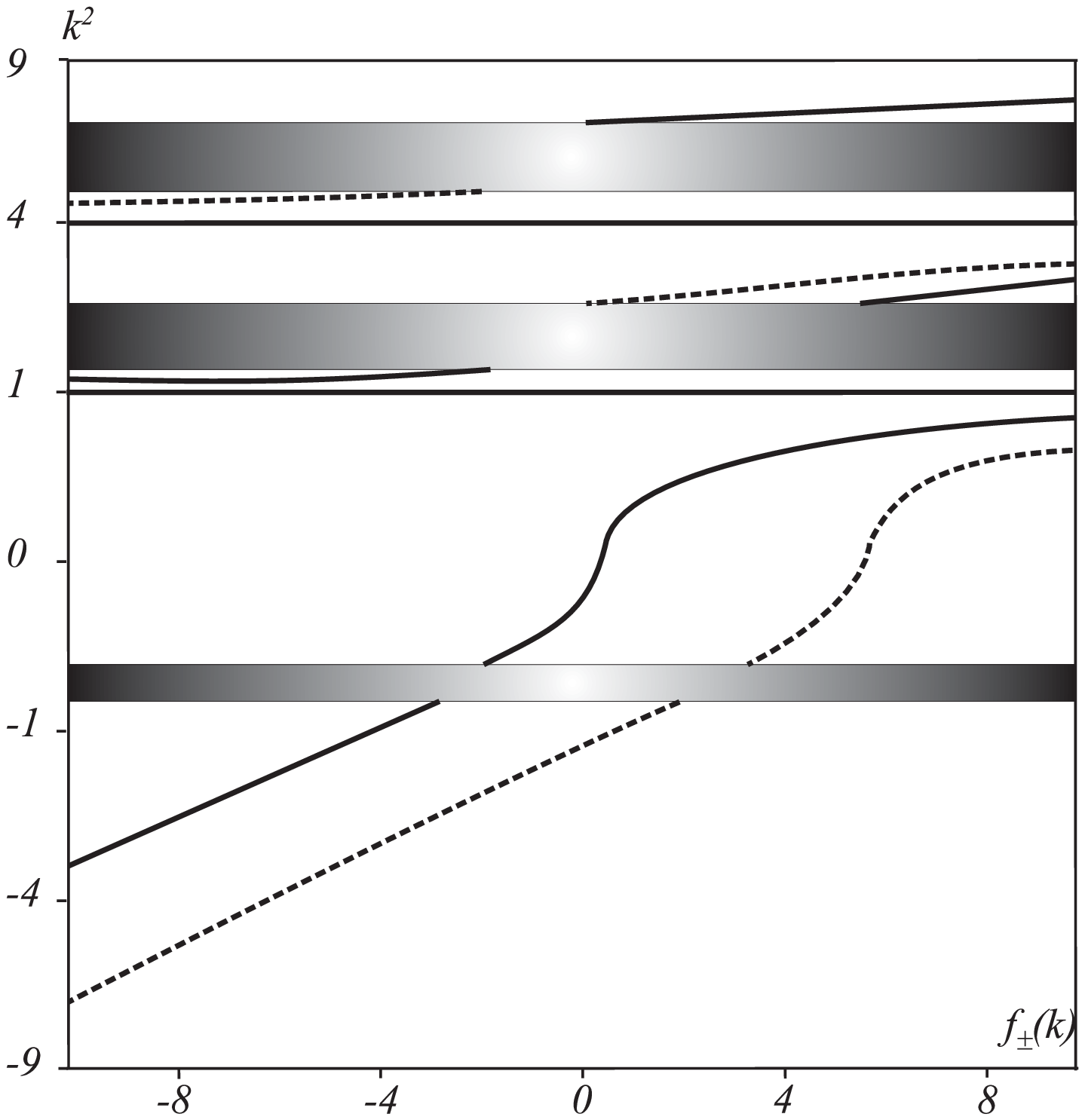}
  \caption{The dependence of $k^2$ on $f_\pm$ for $\cos A\pi=-0.6$, $\gamma_1=3$, $\gamma_2=1$, and the perturbation coupling strength (i) $\alpha=1$, (ii) $\alpha=-1$, (iii) $\alpha=-3$.}\label{fig:m=2}
\end{figure}

\begin{thm}[Magnetic case]
Suppose that $A\notin\mathbb{Z}$. The essential spectrum of $-\Delta_{\balpha+\bgamma^{(2)}}$ coincides with that of $-\Delta_{\alpha}$. If $\gamma_1+\gamma_2<0$, the operator $-\Delta_{\balpha+\bgamma^{(2)}}$ has at least one (and at most two) impurity state in every odd gap of its essential spectrum (starting from the first one). For $\gamma_1+\gamma_2>0$, it has at least one (and at most two) impurity state in every even gap of its essential spectrum.
\end{thm}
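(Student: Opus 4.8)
\medskip

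\noindent\textsl{Proof plan.} The statement on the essential spectrum needs no extra work: $-\Delta_{\balpha+\bgamma^{(2)}}$ differs from $-\Delta_\alpha$ only through a change of the coupling constant at the two vertices $j=1,2$, hence the two resolvents differ by a finite-rank operator and $\sigma_{\mathrm{ess}}(-\Delta_{\balpha+\bgamma^{(2)}})=\sigma_{\mathrm{ess}}(-\Delta_\alpha)=\sigma(-\Delta_\alpha)$. For the point spectrum in the gaps one specializes Proposition~\ref{pro:EVCondition} to $m=2$: inserting $P_0=1$, $P_1=2\xi_1$, $P_2=4\xi_1\xi_2-1$, $Q_0=0$, $Q_1=1$, $Q_2=2\xi_2$ into \eref{cond:spectrumXi} yields $\lambda(k)^2-2(\xi_1(k)+\xi_2(k))\lambda(k)+4\xi_1(k)\xi_2(k)-1=0$, and eliminating $\lambda(k)$ by the identity $\lambda+\lambda^{-1}=2\xi$ satisfied by the roots of \eref{charPoly}, together with $\xi_j-\xi=\gamma_j\sin k\pi/(4k\cos A\pi)$, leads after elementary algebra to the pair of relations \eref{m=2}, i.e. $\gamma_1+\gamma_2=f_\pm(k)$. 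Thus, under the standing hypotheses $\gamma_1\gamma_2\neq0$ and $\gamma_1\neq\gamma_2$, a number $k^2$ in a gap of $\sigma(-\Delta_\alpha)$ is an eigenvalue of $-\Delta_{\balpha+\bgamma^{(2)}}$ --- necessarily simple, since the corresponding eigenfunction is fixed up to a scalar by the one-dimensional eigenspace of the monodromy matrix selected in Theorem~\ref{thm:Duality} --- exactly when $\gamma_1+\gamma_2$ is attained by $f_-$ or by $f_+$ at some $k$ with $k^2$ in that gap.

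\smallskip

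\noindent The core of the argument is the analysis of the two branches $f_\pm$ on each gap, carried out just as for $f$ in the single-impurity case. First one checks that $f_\pm$ are real-valued and continuous in $k$ throughout any gap: on it $\sin k\pi\neq0$, because $k^2\notin\sigma(-\Delta_\alpha)$ excludes $k\in\mathbb{N}$, and $\cos A\pi\neq0$ in the magnetic case, the only point calling for care being $E=0$, where one matches the formulas for $k$ real and $k=\rmi\kappa$ using $\sin k\pi/k\to\pi$. Next one establishes that $f_\pm$ are strictly increasing in $E=k^2$ along each gap. Since $f_\pm=f-\frac{4k\cos A\pi}{\sin k\pi}\bigl(\xi\pm\sqrt{1+(\xi_1-\xi_2)^2}\bigr)$ and the monotonicity of $f$ is already available from the $m=1$ discussion, it is enough to control the correction term, most conveniently by rewriting $f_\pm$ in terms of $\lambda(k)=\xi-\sgn(\xi)\sqrt{\xi^2-1}$, whose modulus is strictly monotone on the gap. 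The two graphs never meet, because \eref{m=2} gives $f_+(k)-f_-(k)=-\frac{8k\cos A\pi}{\sin k\pi}\sqrt{1+(\xi_1(k)-\xi_2(k))^2}$, which is nonzero on the whole gap. Finally one reads off the boundary values: at an edge $k_0$ where $|\xi|\to1$ one has $\sqrt{\xi^2-1}\to0$, so $f\to0$ and $f_\pm\to-\frac{4k_0\cos A\pi}{\sin k_0\pi}\bigl(\xi(k_0)\pm\sqrt{1+(\xi_1(k_0)-\xi_2(k_0))^2}\bigr)$, two finite numbers of opposite sign --- indeed $|\xi(k_0)|=1$ is strictly smaller than $\sqrt{1+(\xi_1(k_0)-\xi_2(k_0))^2}$, which exceeds $1$ because $\gamma_1\neq\gamma_2$, so $\xi(k_0)+\sqrt{1+(\xi_1(k_0)-\xi_2(k_0))^2}>0>\xi(k_0)-\sqrt{1+(\xi_1(k_0)-\xi_2(k_0))^2}$, while $\frac{4k_0\cos A\pi}{\sin k_0\pi}\neq0$; at the remaining edge, where the gap reaches $E=-\infty$ or $E$ equal to a perfect square, elementary asymptotics of $\frac{4k\cos A\pi}{\sin k\pi}$ and $\xi$ drive both $f_\pm$ to $-\infty$ if that edge is the lower one of an odd gap and to $+\infty$ if it is the upper one of an even gap (monotonicity together with finiteness at the other edge leaves no other option). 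Combined with the $m=1$ information on which of the two finite endpoints is positive and which negative --- the roles of $f_\pm$ interchanging every two gaps as $\sin k\pi$ changes sign --- this reproduces the picture stated in the text: on every odd gap one branch increases from $-\infty$ to a positive value and the other from $-\infty$ to a negative value, whereas on every even gap one branch increases from a positive value to $+\infty$ and the other from a negative value to $+\infty$.

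\smallskip

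\noindent With these properties the counting is immediate. Put $c:=\gamma_1+\gamma_2$ and fix a gap. If $c<0$ and the gap is odd, the branch sweeping from $-\infty$ to a positive value attains $c$ exactly once, being continuous, strictly increasing, with $c$ interior to its range, so there is at least one eigenvalue; the branch sweeping from $-\infty$ to a negative value attains $c$ at most once, and by monotonicity no gap carries more than two solutions. If $c>0$ and the gap is even, the branch sweeping from a negative value to $+\infty$ attains $c$ exactly once and the branch sweeping from a positive value to $+\infty$ at most once. Hence $-\Delta_{\balpha+\bgamma^{(2)}}$ has at least one and at most two impurity states in every odd gap when $\gamma_1+\gamma_2<0$ and in every even gap when $\gamma_1+\gamma_2>0$, which is the assertion. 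The step I expect to be the genuine obstacle is the proof of strict monotonicity of $f_\pm$ in $E$ together with the sign determination of the finite gap-edge limits; once these are secured, everything else simply repeats the single-impurity bookkeeping.
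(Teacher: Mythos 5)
Your proposal follows essentially the same route as the paper: specializing Proposition~\ref{pro:EVCondition} to $m=2$ to obtain the pair of conditions \eref{m=2}, and then counting intersections of the level $\gamma_1+\gamma_2$ with the two branches $f_\pm$ using their continuity, strict monotonicity in $k^2$, non-intersection and endpoint behaviour on each gap. The one step you flag as the remaining obstacle --- strict monotonicity of $f_\pm$ together with the signs of the finite gap-edge limits --- is exactly what the paper itself only asserts (supported by the explicit formul{\ae} and Fig.~\ref{fig:m=2}), so your argument is on par with, and in places (opposite signs of the band-edge limits, divergence at the integer edges, non-crossing of the branches) more detailed than, the paper's own proof.
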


In the case of an integer $A$ we can make the following conclusion.

\begin{thm}[Non-magnetic case]
Suppose that $A\in\mathbb{Z}$. The essential spectrum of $-\Delta_{\balpha+\bgamma^{(2)}}$ coincides with that of $-\Delta_{\alpha}$. For $\alpha>0$ and a sufficiently large $\gamma_1+\gamma_2>0$, the operator $-\Delta_{\balpha+\bgamma^{(2)}}$ has no eigenvalues. For $\alpha>0$ and $\gamma_1+\gamma_2<0$, it has at least one impurity state in every gap of its essential spectrum. For $\alpha<0$ and a sufficiently large $\gamma_1+\gamma_2>0$ there is at least one impurity state in every gap except the first one. For $\alpha<0$ and $\gamma_1+\gamma_2<0$ there is at least one eigenvalue below the first band.
\end{thm}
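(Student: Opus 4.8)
\medskip
\noindent\textsl{Proof proposal.}
The plan is to repeat the reasoning used for a single impurity, the new ingredient being that the characteristic equation of Proposition~\ref{pro:EVCondition} is now a genuine quadratic, so it splits into the two branches $f_+,f_-$ of \eref{m=2}. First, since $-\Delta_{\balpha+\bgamma^{(2)}}$ differs from $-\Delta_\alpha$ only through the coupling constants at the two vertices at $x=\pi$ and $x=2\pi$, the resolvents differ by a finite-rank operator and hence $\sigma_{\mathrm{ess}}(-\Delta_{\balpha+\bgamma^{(2)}})=\sigma_{\mathrm{ess}}(-\Delta_\alpha)=\sigma(-\Delta_\alpha)$, the last equality holding because for $A\in\mathbb{Z}$ the unperturbed spectrum is purely essential by Theorem~\ref{thm:spectrumUnperturbed}. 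For the point spectrum we may assume the generic situation $\gamma_1\gamma_2\neq0$ and $\gamma_1\neq\gamma_2$, since the other parameter choices are covered by the single-impurity theorem and by the identical-impurities discussion below. By Proposition~\ref{pro:EVCondition} together with \eref{m=2}, a number $k^2$ lying in a gap of $\sigma(-\Delta_\alpha)$ is an eigenvalue of $-\Delta_{\balpha+\bgamma^{(2)}}$ exactly when $\gamma_1+\gamma_2$ equals $f_+(k)$ or $f_-(k)$ for that $k$; since $f_+$ and $f_-$ are continuous, strictly monotone in $k^2$ on each gap, and never coincide there, each of these equations has at most one root in a given gap, so every gap carries at most two, generically simple, eigenvalues. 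The problem thus reduces to locating the number $\gamma_1+\gamma_2$ relative to the two half-line--like ranges $f_\pm(\mathrm{gap})$.

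To determine those ranges I would invoke the description of $\sigma(-\Delta_\alpha)$ for $A\in\mathbb{Z}$ obtained in Sec.~\ref{sec:period}: every gap is bounded at one end by an infinitely degenerate eigenvalue $k^2=n^2$, $n\in\mathbb{N}$ (or, for the lowest gap, by $k^2\to-\infty$), where $g(k):=\frac{4k\cos A\pi}{\sin k\pi}$ diverges and, as one checks from \eref{m=2}, both $f_\pm$ run off to $-\infty$ if this is the lower end of the gap (the case $\alpha>0$) and to $+\infty$ if it is the upper end (the case $\alpha<0$); at the other end the gap abuts an interior band edge $k_*\notin\mathbb{N}$ with $\xi(k_*)=\sigma\in\{+1,-1\}$. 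There $f(k_*)=0$, and since $g(k)\bigl(\xi_1(k)-\xi_2(k)\bigr)=\gamma_1-\gamma_2$ is constant, \eref{m=2} yields
\[
f_\pm(k_*)=-g(k_*)\,\sigma\ \mp\ \sgn\bigl(g(k_*)\bigr)\sqrt{g(k_*)^2+(\gamma_1-\gamma_2)^2}\,.
\]
An inspection of the band geometry shows that at such an edge the sign of $g(k_*)$ always equals $\sigma$; hence exactly one of $f_+(k_*),f_-(k_*)$ is negative, while the other lies in $(0,|\gamma_1-\gamma_2|]$. Combining this with the monotonicity of $f_\pm$ and with the fact that the two branches interchange their roles between consecutive gaps, one obtains: for $\alpha>0$ the union $f_+(\mathrm{gap})\cup f_-(\mathrm{gap})$ on every gap is a half-line $(-\infty,p)$ with $0<p\le|\gamma_1-\gamma_2|$; for $\alpha<0$ the lowest gap has a range of the same type, whereas every higher gap has a range that contains a half-line $(q,+\infty)$ with $q\ge-|\gamma_1-\gamma_2|$.

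The four assertions of the theorem now follow by comparing $\gamma_1+\gamma_2$ with these ranges. If $\alpha>0$: for $\gamma_1+\gamma_2>|\gamma_1-\gamma_2|$, in particular for all sufficiently large $\gamma_1+\gamma_2>0$, the value $\gamma_1+\gamma_2$ exceeds the upper endpoint of the range on every gap, so $-\Delta_{\balpha+\bgamma^{(2)}}$ has no eigenvalues; for $\gamma_1+\gamma_2<0$ the value lies in $(-\infty,p)$ for every gap, so there is at least one (and at most two) eigenvalue in each gap. If $\alpha<0$: for $\gamma_1+\gamma_2<0$ the lowest gap contains an eigenvalue, i.e. there is a bound state below the lowest band; and for $\gamma_1+\gamma_2>0$ large enough (here $\gamma_1+\gamma_2>|\gamma_1-\gamma_2|$ already suffices) the value belongs to $(q,+\infty)$ for every higher gap, so there is at least one eigenvalue in each gap except the first. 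Together with the first paragraph this is precisely the claimed statement.

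\medskip
\noindent\emph{Expected main obstacle.} The qualitative input — continuity, strict monotonicity in $k^2$ and non-crossing of $f_\pm$, and the way the magnetic band picture degenerates as $A\to\mathbb{Z}$ — is already in hand, so the real work lies in the range computation. One has to keep track simultaneously of the sign of $\alpha$, of the value $\cos A\pi=\pm1$, of whether a given gap end sits at an integer or at an interior band edge, and — the most delicate point — of which of $f_+$ and $f_-$ carries the positive endpoint, since the two branches swap this role from one gap to the next. The single genuinely non-routine ingredient is the uniform bound $p\le|\gamma_1-\gamma_2|$ valid for $\alpha>0$, which is what makes the `no eigenvalue for large $\gamma_1+\gamma_2$' conclusion possible; it drops out of the displayed formula for $f_\pm(k_*)$ as soon as one notices that $\sgn g(k_*)=\xi(k_*)$ at every interior band edge.
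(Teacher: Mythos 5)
Your route is in substance the paper's own: reduce to the characteristic equation \eref{m=2} via Proposition~\ref{pro:EVCondition}, use continuity, strict monotonicity in $k^2$ and non-crossing of $f_\pm$ on each gap, and then locate $\gamma_1+\gamma_2$ relative to the endpoint values of the two branches. The only methodological difference is that the paper obtains the non-magnetic statement by letting $|\cos A\pi|\to1$ in the magnetic picture (half of the gaps close, exactly as in the single-impurity example), whereas you evaluate the branch limits directly at $A\in\mathbb{Z}$; your way is a bit more quantitative and produces the explicit threshold $|\gamma_1-\gamma_2|$ behind ``sufficiently large''.

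There is, however, a concrete error in the step you yourself flag as the crucial one: the claim that $\sgn g(k_*)=\xi(k_*)$ at \emph{every} interior band edge is false for $\alpha<0$. Take $A$ even and $\alpha<0$: the interior edges are then the \emph{upper} edges of the bands, i.e.\ the lower endpoints of the gaps $(k_*^2,(n+1)^2)$, and there $\xi(k_*)=(-1)^{n+1}$ while $\sgn g(k_*)=\sgn\sin k_*\pi=(-1)^n$, so the signs are opposite. Consequently the positive one of $f_\pm(k_*)$ equals $|g(k_*)|+\sqrt{g(k_*)^2+(\gamma_1-\gamma_2)^2}>|\gamma_1-\gamma_2|$, not a value in $(0,|\gamma_1-\gamma_2|]$; it is the negative endpoint that is small, lying in $(-|\gamma_1-\gamma_2|,0)$. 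The identity $\sgn g(k_*)=\xi(k_*)$ holds precisely at \emph{lower} band edges, i.e.\ at upper endpoints of gaps --- all gaps when $\alpha>0$, and only the lowest gap when $\alpha<0$ --- which is fortunately exactly where you need its full strength (the uniform bound $p\le|\gamma_1-\gamma_2|$ for the ``no eigenvalues'' claim, and the bound state below the first band). For the higher gaps with $\alpha<0$ the conclusion survives on weaker input: since $f_+(k_*)f_-(k_*)=-(\gamma_1-\gamma_2)^2<0$, exactly one endpoint value is negative regardless of $\sgn g(k_*)$, and both branches diverge to $+\infty$ at the integer endpoint of the gap, so every positive $\gamma_1+\gamma_2$ (not only large ones) is attained there. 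So the theorem as stated does follow from your argument once the sign claim is corrected and split according to whether the band edge bounds the gap from above or from below; as written, the ``inspection of the band geometry'' would mislead a reader in the $\alpha<0$ case.
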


\subsection{Example: an array of identical impurities}

Let us consider again arbitrary $m\in\mathbb{N}$, however, restricting ourselves to the case when all the $\gamma_j$ are equal. We keep the notation $-\Delta_{\balpha+\bgamma}$ for the corresponding Hamiltonian. Abbreviating
\[
\cos\phi(k) = \xi_1(k)\,,
\]
we see that the matrix $\mathcal{N}_m$ of Proposition~\ref{pro:EVCondition} acquires the form
\[
\mathcal{N}_m(k)
=\frac{1}{\sin\phi(k)}
\left(\begin{array}{cc}
\sin[(m+1)\phi(k)] & -\sin [m\phi(k)]
\\[5pt]
\sin [m\phi(k)] & -\sin[(m-1)\phi(k)]
\end{array}\right)\,,
\]
and the characteristic equation \eref{cond:spectrumXi} has the following solutions
\begin{eqnarray*}
\lambda(k)
&=\cos\phi(k)
+\textstyle
\cot\Big[\frac{(m-1)\phi(k)}2\Big]\sin\phi(k)\,,
\\
\lambda(k)
&=\cos\phi(k)
-\textstyle
\tan\Big[\frac{(m-1)\phi(k)}2\Big]\sin\phi(k)\,.
\end{eqnarray*}
leading thus to the characteristic conditions
\begin{eqnarray}\nonumber
\gamma_1&=f(k)-\frac{4k\cos A\pi}{\sin k\pi}
\textstyle\cot\Big[\frac{(m-1)\phi(k)}2\Big]
\sin\phi(k)\,,
\\[-.5em]
\label{eq:gamma_1=...=gamma_m}
\\[-.5em]\nonumber
\gamma_1&=f(k)+\frac{4k\cos A\pi}{\sin k\pi}
\textstyle\tan\Big[\frac{(m-1)\phi(k)}2\Big]
\sin\phi(k)\,.
\end{eqnarray}
Note that if $m=2$, the above conditions lead to relations \eref{m=2} with $\gamma_2=\gamma_1$.

The explicit structure of the characteristic equations \eref{eq:gamma_1=...=gamma_m} shows that as $k^2$ varies from $-\infty$ to the lower end of $\sigma(-\Delta_\alpha)$, then both functions at the right-hand side of the above relations are continuous with respect to $k$, except possibly a finite number of points, strictly increasing in $k^2$ and their graphs do not intersect, in particular, at least one of the functions includes $(-\infty,0]$ in its range. Thus the following claim holds.

\begin{thm}
The essential spectrum of $-\Delta_{\balpha+\bgamma}$ coincides with that of $-\Delta_{\alpha}$. Suppose that $\gamma_1=\ldots=\gamma_m$, then $-\Delta_{\balpha+\bgamma}$ has at least one eigenvalue below the first spectral band.
\end{thm}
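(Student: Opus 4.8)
The plan is to examine the two characteristic conditions in \eref{eq:gamma_1=...=gamma_m} as $k$ runs through the positive imaginary axis below the threshold of $\sigma(-\Delta_\alpha)$, i.e. put $k=\rmi\kappa$ with $\kappa$ ranging over the corresponding interval $(0,\kappa_0)$ (or $(\kappa_0,\kappa_1)$ if the first band is itself negative), and to show that the number $\gamma_1$ is attained. First I would recall from Proposition~\ref{pro:EVCondition} and the construction preceding it that for $k^2\in\mathbb{R}\setminus\sigma(-\Delta_\alpha)$ the existence of an eigenvalue is \emph{equivalent} to one of the two relations in \eref{eq:gamma_1=...=gamma_m} holding; hence it suffices to prove that at least one of the two right-hand sides, viewed as a function of $k^2$ on the first gap, has $(-\infty,0]$ in its range, because then for any value $\gamma_1\le 0$ — and in particular for the genuinely perturbing case of interest — the equation is solvable. (For $\gamma_1>0$ one still gets an eigenvalue below the first band provided the range extends far enough; the statement as phrased only claims existence of \emph{at least one} eigenvalue below the first band, and I would check that the endpoint behaviour forces this for every sign of $\gamma_1$, or else restrict the claim implicitly to the regime where $f$ already guarantees it, matching the single-impurity and two-impurity theorems above.)

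The core of the argument is a continuity-and-limits analysis on the interval. I would first establish \emph{continuity} of both right-hand sides of \eref{eq:gamma_1=...=gamma_m} in $k$ on the open first gap, noting that the potential singularities come only from zeros of $\sin k\pi$ (absent on $\rmi\mathbb{R}^+$, as remarked after Proposition~\ref{prop:spectr}) and from zeros of $\sin\phi(k)$ together with the poles of $\cot$ and $\tan$; at such points the two functions exchange roles but their union remains continuous, which is why the statement is phrased in terms of "at least one" of them. Next I would compute the \emph{boundary values}. At the lower end $k^2\to-\infty$ (equivalently $\kappa\to\infty$): here $\xi(k)=\frac{1}{\cos A\pi}(\cos k\pi+\frac{\alpha}{4k}\sin k\pi)$ behaves like $\frac{\cosh\kappa\pi}{\cos A\pi}\to\pm\infty$, so $\phi(k)$ becomes large imaginary, the hyperbolic $\cot/\tan$ factors tend to $\pm1$, and the dominant term $f(k)=-\sgn(\xi(k))\frac{4k\cos A\pi}{\sin k\pi}\sqrt{\xi(k)^2-1}$ is, in the $\kappa$ variable, of order $-4\kappa\cosh\kappa\pi/\sinh\kappa\pi\sim -4\kappa\to-\infty$. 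So at least one of the two branches tends to $-\infty$. At the upper end $k^2\to\kappa_0^2{}^-$, i.e. as we approach the lower edge of the first band of $\sigma(-\Delta_\alpha)$, we have $\xi(k)^2\to1$, hence $\sqrt{\xi(k)^2-1}\to0$ and $f(k)\to0$, while the $\phi$-dependent correction terms stay bounded (the relevant $\cot$ or $\tan$ argument does not hit a pole generically, and where it does the other branch takes over); so the branch that ran to $-\infty$ reaches a finite value, in fact one can check it reaches $0$ or a controlled finite number. By the intermediate value theorem the branch in question sweeps out an interval containing $(-\infty,0]$, which yields a solution $k=\rmi\kappa\in(0,\kappa_0)$ of one of the equations in \eref{eq:gamma_1=...=gamma_m} for the prescribed $\gamma_1$, and Proposition~\ref{pro:EVCondition} turns that $k$ into the claimed eigenvalue below the first spectral band. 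Finally, by Theorem~\ref{thm:Duality} the corresponding $\{\psi_j(k)\}$, being built from the decaying eigenvalue $\lambda(k)$ of $N$ with $|\lambda(k)|<1$ on both sides of the impurity block, lies in $\ell_2(\mathbb{Z})$, so the associated function on $\Gamma$ is genuinely square integrable; the essential-spectrum claim is immediate since $-\Delta_{\balpha+\bgamma}$ and $-\Delta_\alpha$ differ by a finite-rank resolvent perturbation.

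The main obstacle I anticipate is controlling the $\phi$-dependent factors $\cot[(m-1)\phi(k)/2]\sin\phi(k)$ and $\tan[(m-1)\phi(k)/2]\sin\phi(k)$ uniformly on the gap: these are the source of the "except possibly a finite number of points" caveat and of the branch-switching, and one must verify carefully that (a) the union of the two graphs is continuous across every such switch, (b) neither branch escapes to $\pm\infty$ in the \emph{interior} of the gap in a way that would destroy monotonicity, and (c) the limiting values at the two endpoints are as claimed despite the fact that $\phi$ itself is complex (imaginary part) there. The cleanest route is probably to substitute $\xi_1(k)=\cosh\vartheta(k)$ with real $\vartheta\ge0$ on $\rmi\mathbb{R}^+$, rewrite everything in hyperbolic functions, and exploit that $\sinh$, $\cosh$ are positive and monotone so that the products $\coth[(m-1)\vartheta/2]\sinh\vartheta$ and $\tanh[(m-1)\vartheta/2]\sinh\vartheta$ are manifestly smooth, positive and monotone in $\vartheta$ — removing the apparent singularities entirely once one is strictly on the negative axis. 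The monotonicity in $k^2$ then follows from monotonicity of $\vartheta$ together with that of $f$, and the endpoint limits are read off directly. I would present this hyperbolic reformulation as the technical heart and leave the real-axis higher-gap analysis (not needed for the stated conclusion) to the earlier examples.
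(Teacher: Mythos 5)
Your overall strategy coincides with the paper's own: the paper likewise proves the claim by inspecting the two explicit right-hand sides of \eref{eq:gamma_1=...=gamma_m} on the half-line below the first band, asserting continuity up to finitely many points, strict monotonicity in $k^2$, non-intersection of the two graphs, and that at least one of the functions contains $(-\infty,0]$ in its range, so that the eigenvalue is produced by exactly the intermediate-value argument you describe; the essential-spectrum part is settled by the finite-rank resolvent difference, as in your last step.

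The place where your write-up goes wrong is precisely what you call the technical heart. On the negative half-axis you propose to set $\xi_1(k)=\cosh\vartheta(k)$ with real $\vartheta\ge0$ and to conclude that $\coth[(m-1)\vartheta/2]\sinh\vartheta$ and $\tanh[(m-1)\vartheta/2]\sinh\vartheta$ are smooth and monotone, ``removing the apparent singularities entirely''. But $\xi_1$ is built from the \emph{perturbed} coupling constant $\alpha+\gamma_1$, while the region $k^2<\inf\sigma(-\Delta_\alpha)$ is characterized by $|\xi(k)|>1$, not by $\xi_1(k)\ge1$: depending on the signs and sizes of $\cos A\pi$ and $\alpha+\gamma_1$, the value $\xi_1(\rmi\kappa)$ may lie in $(-1,1)$ (real $\phi$, where $\cot[(m-1)\phi/2]$ or $\tan[(m-1)\phi/2]$ genuinely blows up) or below $-1$. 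These interior poles are exactly the ``jumps of the second order'' the paper admits --- at most $m-2$ of them in the first gap --- so your requirements (a) and (b) cannot be verified in the form you state them: branch switching and interior escapes to $\pm\infty$ do occur. They are, however, harmless for the existence statement: a branch that increases from $-\infty$ and tends to $+\infty$ on the left of a pole has already swept $(-\infty,0]$, and if no pole occurs one argues through the edge values as you do; this is in effect how the paper's one-line proof is to be read, and Proposition~\ref{pro:EVCondition} then converts the solution into an eigenvalue. Finally, your hesitation about $\gamma_1>0$ is well placed: the argument (the paper's as well as yours) only yields the range $(-\infty,0]$, and already for $m=1$ with $\gamma_1>0$ there is no eigenvalue below the first band, cf.\ \eref{m=1}, so the theorem must indeed be read with the sign restriction your hedge suggests rather than for arbitrary identical $\gamma_j$.
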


As we have mentioned both the functions at the right-hand side of \eref{eq:gamma_1=...=gamma_m} could have a finite number of jumps of the second order (approaching $\infty$ in the left vicinity of the discontinuity point and $-\infty$ in the right one). The total number of discontinuity points does not exceed $m-2$, hence the total number of eigenvalues in the first spectral gap does not exceed $m$. Of course, this fact follows also from general principles \cite{Wei80}, since the operators $-\Delta_{\alpha}$ and $-\Delta_{\balpha+\bgamma}$ have a common symmetric restriction with deficiency indices not exceeding $(m, m)$.


\section{Weakly coupled systems}\label{sec:wcoupl}

Our next task is to compare the spectral properties of $-\Delta_\alpha$ with those produced by a weak finite-rank perturbation. Specifically, we suppose that the perturbation strength is $\eps\gamma_j,\, j=1,\dots,m$, at the vertices with the coordinates from $\pi\mathbb{M}$. Here $\eps$ is a small parameter and the perturbed Hamiltonian will be denoted by $-\Delta_{\balpha+\eps\bgamma}$. As one could expect such kind of perturbation preserves again the essential spectrum. In this section we are going to demonstrate that, as $\eps\to0$, the presence of the eigenvalue in the gap of the essential spectrum of $-\Delta_{\balpha+\eps\bgamma}$ is determined by the sign of $\sum\gamma_j$.

With this aim, we mimick the argument from the proof of Proposition~\ref{pro:EVCondition} to obtain a pair of characteristic equations,
\begin{eqnarray*}
\det\{\mathcal{N}_m(k) u_2(k),u_1(k)\}&=0\,,
\qquad\qquad\xi(k)<-1\,,
\\
\det\{\mathcal{N}_m(k) u_1(k),u_2(k)\}&=0\,,
\qquad\qquad\xi(k)>1\,,
\end{eqnarray*}
The product $\mathcal{N}_m=N_m\ldots N_1$ here can be written in terms of the matrices
\[
N_j(k)=N(k)+\frac{\eps\gamma_j\sin k\pi}{2k\cos A\pi}M
\]
with
\[
N(k)=
\left(\begin{array}{cc}
2\xi(k) & -1 \\ 1 & 0
\end{array}\right)\,,
\qquad
M=
\left(\begin{array}{cc}
1 & 0 \\ 0 & 0
\end{array}\right)\,.
\]
Using the explicit structure of the matrices $N_j$, one can easily see that, as $\eps\to0$, the product $\mathcal{N}_m$ admits the following expansion,
\[
\mathcal{N}_m(k)=(N(k))^m+\frac{\eps\sin k\pi}{2k\cos A\pi}\sum_{j\in\mathbb{M}}\gamma_j(N(k))^{m-j}M(N(k))^{j-1}+\OO(\eps^2)\,.
\]
Next, using the fact that $u_j$ is an eigenvector of the matrix $N$, we find the relations
\begin{eqnarray*}
\fl\det\{(N(k))^{m-j}M(N(k))^{j-1}u_2(k),u_1(k)\}
&=
(\lambda_2(k))^{m-1}\det\{Mu_2(k),u_1(k)\}\,,
\\\fl
\det\{(N(k))^{m-j}M(N(k))^{j-1}u_1(k),u_2(k)\}
&=
(\lambda_1(k))^{m-1}\det\{Mu_1(k),u_2(k)\}\,,
\end{eqnarray*}
which together with the above asymptotic formula for  $\mathcal{N}_m$ result in the relation
\begin{eqnarray*}
\det\{\mathcal{N}_m(k) u_2(k),u_1(k)\}=
(\lambda_2(k))^m\det\{u_2(k),u_1(k)\}
\\
\qquad
+
(\lambda_2(k))^{m-1}\det\{Mu_2(k),u_1(k)\}\frac{\eps\sin k\pi}{2k\cos A\pi}\sum_{j\in\mathbb{M}}\gamma_j+\OO(\eps^2)\,,
\end{eqnarray*}
or
\begin{eqnarray*}
\det\{\mathcal{N}_m(k) u_1(k),u_2(k)\}=
(\lambda_1(k))^m\det\{u_1(k),u_2(k)\}
\\
\qquad
+
(\lambda_1(k))^{m-1}\det\{Mu_1(k),u_2(k)\}\frac{\eps\sin k\pi}{2k\cos A\pi}\sum_{j\in\mathbb{M}}\gamma_j+\OO(\eps^2)
\end{eqnarray*}
as $\eps\to0$. Finally, from what has already been said, cf.~\eref{m=1}, it follows that the characteristic equation for $-\Delta_{\balpha+\eps\bgamma}$ reads as follows,
\begin{equation}\label{m>2}
\eps\sum_{j\in\mathbb{M}}\gamma_j+\OO(\eps^2)=f(k)\,,\qquad\eps\to0\,,
\end{equation}
where $f$ stands for the right-hand side of~\eref{m=1}. As a result we immediately obtain the following claim.

\begin{thm}[Magnetic case]
Suppose that $A\notin\mathbb{Z}$. For any $\eps\in(0,1)$ the essential spectrum of $-\Delta_{\balpha+\eps\bgamma}$ coincides with that of $-\Delta_\alpha$. Assume that $\sum_{j\in\mathbb{M}}\gamma_j<0$, then in the limit $\eps\to0$, the operator $-\Delta_{\balpha+\eps\bgamma}$ has exactly one simple impurity state in every odd gap of its essential spectrum. If the sum $\sum_{j\in\mathbb{M}}\gamma_j$ is positive, then in the limit $\eps\to0$  it has exactly one simple impurity state in every even gap of its essential spectrum. If $k_{n,\eps}^2$ is the corresponding eigenvalue, then $k_{n,\eps}$ admits the following asymptotic expansion
\[
k_{n,\eps}=k_n+(-1)^{2n+1}K_n\eps^2+\OO(\eps^2)\,,
\qquad\eps\to0\,,\qquad n\in\mathbb{N}\,.
\]
Here $k_1^2<k_2^2<\ldots$ are produced by the solutions to the equation $|\xi(k)|=1$, and
\[
	K_n
=
	\frac
{\sin k_n\pi\big(\sum_{j\in\mathbb{M}}\gamma_j\big)^2}
{\cos A\pi(32 k\pi_n^2-8\alpha k\pi_n\cot k_n\pi+8\alpha)}
\,,\qquad n\in\mathbb{N}\,.
\]
\end{thm}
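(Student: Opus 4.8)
\medskip

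The plan is to base everything on the scalar characteristic equation~\eref{m>2} already derived in this section. Write it as $f(k)=\eps\Gamma+\eps^2\rho(k,\eps)$ with $\Gamma:=\sum_{j\in\mathbb{M}}\gamma_j$ and $f$ the right-hand side of~\eref{m=1}; here the remainder $\rho$ is in fact a polynomial in $\eps$ of degree at most $m$ with coefficients analytic in $k$ on each gap of $\sigma(-\Delta_\alpha)$, since $\mathcal{N}_m(k)=N_m(k)\cdots N_1(k)$ is polynomial of degree $m$ in $\eps$, each factor $N_j(k)=N(k)+\frac{\eps\gamma_j\sin k\pi}{2k\cos A\pi}M$ being affine in $\eps$. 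Because $\|N(k)\|$ and $|\sin k\pi/k|$ are uniformly bounded on the union of the gaps and $\cos A\pi\ne0$ is fixed, the bounds on $\rho$ and on $\partial_k\rho$ can moreover be taken uniform in the gap index. The coincidence of the essential spectra, valid for every $\eps$ and not only for small ones, follows exactly as in the previous sections: $-\Delta_{\balpha+\eps\bgamma}$ and $-\Delta_\alpha$ share a symmetric restriction — freeze the function values at the $m$ perturbed vertices — with deficiency indices at most $(m,m)$, see~\cite{Wei80}.

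For the impurity states I would first recall, from the discussion following~\eref{m=1}, that for $A\notin\mathbb{Z}$ and on each gap the function $f$ is continuous and strictly increasing in $k^2$, sweeping $(-\infty,0)$ on every odd gap and $(0,\infty)$ on every even one. Take $\Gamma<0$; then for $\eps$ small the right-hand side of~\eref{m>2} is a small negative number. On an odd gap the map $k\mapsto f(k)-\eps\Gamma-\eps^2\rho(k,\eps)$ passes from $-\infty$ near the lower edge to a positive value near the upper one and hence vanishes; on the part of the gap where $f$ is far below $\eps\Gamma$ it cannot vanish, and on a neighbourhood of the upper edge it is strictly increasing for $\eps$ small, since there $f'$ is bounded below by a positive constant (indeed $f'\to+\infty$ at each finite band edge, as the next paragraph shows) while $\partial_k(\eps^2\rho)$ is $\OO(\eps^2)$. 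Thus the zero is unique and simple, which gives exactly one impurity state; its simplicity follows from the one-dimensionality of the relevant eigenspace of $N$, as in the proof of Proposition~\ref{pro:EVCondition}. On an even gap $f>0$ while the right-hand side is negative, so there is no solution. The case $\Gamma>0$ is the mirror image, with odd and even gaps exchanging roles, and a single $\eps_0$ serves all gaps thanks to the uniform bounds noted above.

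It then remains to extract the asymptotics. For $\eps$ small the eigenvalue $k_{n,\eps}^2$ sits near a band edge $k_n$ defined by $|\xi(k_n)|=1$, where $f$ vanishes. The decisive point is that this is a square-root zero: $\xi(k)^2-1$ has a \emph{simple} zero at $k_n$, its derivative $2\xi(k_n)\xi'(k_n)$ being non-zero since $\xi'(k_n)\ne0$ (a band edge cannot be a critical point of $\xi$, for otherwise the set $\{|\xi|\le1\}$ would not terminate there). With $s:=(k-k_n)^{1/2}$, the branch chosen real on the gap side, $f$ becomes an odd analytic function of $s$,
\[
f(k)=C_n s+\OO(s^3)\,,\qquad C_n=-\sgn\xi(k_n)\,\frac{4k_n\cos A\pi}{\sin k_n\pi}\,\sqrt{2\,|\xi'(k_n)|}\ne0\,.
\]
Substituting into~\eref{m>2} and solving for $s$ yields $s=\Gamma\eps/C_n+\OO(\eps^2)$, whence
\[
k_{n,\eps}=k_n+\frac{\Gamma^2}{C_n^2}\,\eps^2+\OO(\eps^3)\,,
\]
the sign of the quadratic term being fixed by the requirement that $k_{n,\eps}^2$ lie inside the gap, which one identifies with the sign displayed in the theorem. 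A direct computation gives $C_n^2=32\,k_n^2\cos^2 A\pi\,|\xi'(k_n)|/\sin^2 k_n\pi$, and using $|\xi(k_n)|=1$ to trade $\cos k_n\pi$ for $\cot k_n\pi$ turns $\Gamma^2/C_n^2$ into the stated value of $K_n$.

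I expect the main obstacle to be precisely this half-integer scaling at the band edges. Because $f$ has a square-root rather than a simple zero there, the eigenvalue shift is of order $\eps^2$, not $\eps$, so the implicit function theorem is not directly available: one must work in the Puiseux variable $s=(k-k_n)^{1/2}$, rule out the degenerate possibility $\xi'(k_n)=0$, and check that the $\OO(\eps^2)$ term in~\eref{m>2} enters the equation for $s$ only at order $\eps^2$, so that it disturbs $k_{n,\eps}-k_n$ only at order $\eps^3$ and leaves the $\eps^2$ coefficient clean. A lesser complication is the bookkeeping of the gaps: the infinitely degenerate eigenvalues $k^2$, $k\in\mathbb{N}$, lie between consecutive bands and split the inter-band regions, so some care is needed in translating ``odd/even gap'' into a statement about individual edges $k_n$.
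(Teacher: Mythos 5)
Your proposal follows essentially the same route as the paper's (very brief) proof: existence and uniqueness of the gap eigenvalue are read off from the monotone behaviour of $f$ on each gap established in the single-impurity analysis, and the asymptotics come from expanding the right-hand side of~(5.1), i.e.\ $f$, about the band edge $k_n$ where $|\xi(k_n)|=1$, whose square-root behaviour yields the $\eps^2$ shift with coefficient $\Gamma^2/C_n^2$ coinciding (up to the paper's typographical slips) with the stated $K_n$. The extra care you take — the Puiseux variable, the nondegeneracy $\xi'(k_n)\neq 0$, the control of the $\OO(\eps^2)$ remainder — only makes explicit what the paper leaves as ``Taylor expansions at the right-hand side of (5.1)''.
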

\begin{proof}
The claims concerning existence of eigenvalues follow from the arguments above; the argument for the asymptotic expansion claim is rather straightforward, it suffices to use Taylor expansions at the right-hand side of~\eref{m>2}.
\end{proof}
\begin{thm}[Non-magnetic case]
Suppose that $A\in\mathbb{Z}$. The essential spectrum of $-\Delta_{\balpha+\eps\bgamma}$ coincides with that of $-\Delta_{\alpha}$. To describe the discrete spectrum $-\Delta_{\alpha+\eps\bgamma}$ in the limit $\eps\to0$ we distinguish four cases depending on the sign of coupling constant $\alpha$ and the sum $\sum_{j\in\mathbb{M}}\gamma_j$. First we assume that both are positive, then $-\Delta_{\balpha+\eps\bgamma}$ has no eigenvalues as $\eps\to0$.
For $\alpha>0$ and $\sum_{j\in\mathbb{M}}\gamma_j<0$, the operator $-\Delta_{\balpha+\eps\bgamma}$ has precisely one simple impurity state in every gap of its essential spectrum. For $\alpha<0$ and $\sum_{j\in\mathbb{M}}\gamma_j>0$ there is precisely one simple impurity state in every gap except the first one. Finally, if we assume that both $\alpha$ and $\sum_{j\in\mathbb{M}}\gamma_j$ are negative, then there is precisely one simple eigenvalue below the first band and $-\Delta_{\balpha+\eps\bgamma}$ has no other eigenvalues.
\end{thm}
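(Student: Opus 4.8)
\noindent\textsl{Proof proposal.} The plan is to repeat the argument of the preceding (magnetic) theorem, the only new ingredient being the specialization of the auxiliary function $f$ to integer values of $A$.

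I would first dispose of the essential spectrum: the operators $-\Delta_{\balpha+\eps\bgamma}$ and $-\Delta_\alpha$ differ only through the $\delta$-conditions at the $m$ vertices with coordinates in $\pi\mathbb{M}$, so by Krein's formula their resolvents differ by an operator of rank at most $m$; Weyl's theorem then gives $\sigma_{\mathrm{ess}}(-\Delta_{\balpha+\eps\bgamma})=\sigma_{\mathrm{ess}}(-\Delta_\alpha)$ for every $\eps$, in particular for $\eps\in(0,1)$.

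Next I would observe that the entire reduction of Section~\ref{sec:mcoupl}, and in particular the derivation of the characteristic equation \eref{m>2}, never used $A\notin\mathbb{Z}$ — it used only $\cos A\pi\neq0$, which for $A\in\mathbb{Z}$ holds since $\cos A\pi=(-1)^A$. Hence Proposition~\ref{pro:EVCondition} together with the $\ell_2$ criterion of Theorem~\ref{thm:Duality} applies verbatim: for $\eps$ small the eigenvalues of $-\Delta_{\balpha+\eps\bgamma}$ in a gap of $\sigma(-\Delta_\alpha)$ are precisely the numbers $k^2$, with $k$ in that gap, solving $\eps\sum_{j\in\mathbb{M}}\gamma_j+\OO(\eps^2)=f(k)$, $f$ being the right-hand side of \eref{m=1}. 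Each such eigenvalue is simple, since once $k$ is fixed the vectors $\Phi_1$ and $\Phi_{m+1}$ are forced into the one-dimensional eigenspaces of $N(k)$ with $|\lambda|>1$ and $|\lambda|<1$ respectively, whence the whole sequence $\{\Phi_j\}$, and thus the eigenfunction, is determined up to a scalar. I would also remark that no eigenvalue can be embedded in an open band or sit at a band edge, because there both eigenvalues of $N(k)$ are unimodular (or form a Jordan block) and no solution of \eref{discr_relat} is square summable; so the discrete spectrum is exhausted by what the gaps contribute.

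Finally I would feed in the behavior of $f$ on the gaps for integer $A$, already recorded in the single-impurity discussion: for $\alpha>0$ the function $f$ is continuous and strictly increasing in $k^2$ and sweeps every gap from $-\infty$ to $0$, while for $\alpha<0$ it sweeps the first gap — the half-line below the lowest band edge $k_1^2$ — from $-\infty$ to $0$ and each subsequent gap from $0$ to $\infty$. Since a continuous strictly monotone function attains a prescribed value on a gap exactly once when that value lies interior to its range there and not at all otherwise, and since $\eps\sum_{j\in\mathbb{M}}\gamma_j+\OO(\eps^2)$ is for small $\eps>0$ a small number with the sign of $\sum_{j\in\mathbb{M}}\gamma_j$, the four sign combinations of $\alpha$ and $\sum_{j\in\mathbb{M}}\gamma_j$ give exactly the four assertions. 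The computation is bookkeeping once the magnetic case is in hand; the only steps that genuinely need care — and which I would treat as the main obstacle — are checking that the $\eps$-expansion of the characteristic equation and the monotonicity/range statements for $f$ survive the limit $|\cos A\pi|\to1$, equivalently hold verbatim at $A\in\mathbb{Z}$, and correctly identifying the ``first gap'' in the $\alpha<0$ subcases with the half-line below the (possibly negative) lowest band edge.
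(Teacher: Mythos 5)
Your proposal is correct and follows essentially the same route as the paper: the finite-rank resolvent difference for the essential spectrum, the weak-coupling characteristic equation $\eps\sum_{j\in\mathbb{M}}\gamma_j+\OO(\eps^2)=f(k)$ obtained from Proposition~\ref{pro:EVCondition}, and the monotonicity and range of $f$ on the gaps for $A\in\mathbb{Z}$ as recorded in the single-impurity discussion, which the paper likewise treats as the whole content of the non-magnetic theorem. Your added remarks on simplicity and the validity of the reduction at $\cos A\pi=\pm1$ only make explicit what the paper leaves implicit.
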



\section{Systems with distant impurities}
\label{sec:dpert}

Finally, let us consider the system with two impurities at large distances from each other. To be more specific, we change the coupling constants at two arbitrary but fixed points into $\alpha+\gamma_1$ and $\alpha+\gamma_2$; we suppose that there are exactly $n$ graph vertices between the chosen two. For the sake of brevity, let $-\Delta_{\alpha,n}$ denote the Hamiltonian of the perturbed system; we are interested in spectral properties of the operator $-\Delta_{\alpha,n}$ for large $n$. To this aim, we repeat the proof of Proposition~\ref{pro:EVCondition} to obtain one of the characteristic equations
\begin{eqnarray*}
\det\{\mathcal{N}_n(k) u_2(k),u_1(k)\}=0
\end{eqnarray*}
or
\begin{eqnarray*}
\det\{\mathcal{N}_n(k) u_1(k),u_2(k)\}=0\,,
\end{eqnarray*}
depending on $\sgn(\xi(k))$, where the product $\mathcal{N}_n$ is defined as follows
\[
\mathcal{N}_n(k)
=
\underbrace{\Big(
	N(k)+\frac{\gamma_2\sin k\pi}{2k\cos A\pi}M
\Big)}_{N_2(k)}
(N(k))^n
\underbrace{\Big(
	N(k)+\frac{\gamma_1\sin k\pi}{2k\cos A\pi}M
\Big)}_{N_{1}(k)}
\]
with the definitions of the matrices $N$ and $M$ given above. Noting that
\begin{eqnarray*}
N_i(k)u_j(k)
	&=
c_{1j}(k,\gamma_i)\lambda_1(k)u_1(k)
	+
c_{2j}(k,\gamma_i)\lambda_2(k)u_2(k)\,,
\end{eqnarray*}
with
\begin{equation}\label{matr:C}
c_{ij}(k,\gamma):=\delta_{ij}+
{(-1)^i\gamma\,\sgn(\xi(k))}/{f(k)}\,,
\end{equation}
where $\delta_{ij}$ is the Kronecker delta and $f$ stands for the right-hand of \eref{m=1}. We recall that $u_j$ is an eigenvector of $N$ corresponding to the eigenvalue $\lambda_j$ to conclude that
\begin{eqnarray*}
\fl
\det\{\mathcal{N}_n(k) u_2(k),u_1(k)\}&=
\big[(\lambda_1(k))^nc_{12}(k,\gamma_1)c_{21}(k,\gamma_2)
\\\fl
&\quad
+
(\lambda_2(k))^nc_{22}(k,\gamma_1)c_{22}(k,\gamma_2)\big]
\det\{u_2(k),u_1(k)\}\,,
&
\\\fl
\det\{\mathcal{N}_n(k) u_1(k),u_2(k)\}
&=
\big[(\lambda_1(k))^nc_{11}(k,\gamma_1)c_{11}(k,\gamma_2)
\\\fl
&\quad+
(\lambda_2(k))^nc_{12}(k,\gamma_1)c_{21}(k,\gamma_2)\big]
\det\{u_1(k),u_2(k)\}\,.
\end{eqnarray*}
Using the expression for the entries $c_{ij}$, the characteristic equation for our system reads
\begin{equation}\label{2farImp}
(f(k)/\gamma_1-1)(f(k)/\gamma_2-1)=
(\lambda(k))^{2n+2}\,.
\end{equation}
Consider an arbitrary but fixed  spectral gap of $-\Delta_\alpha$, denoted $\mathcal{I}$, which is an open subset of the real line.
We have to analyze equation~\eref{2farImp} as $k^2$ varies from the lower end of $\mathcal{I}$ to its upper end. It is worth noting that the right-hand side of the above relation approaches one, as $k^2$ approximates any of the endpoints of the spectral gap, and tends to zero as $n\to\infty$  within the interval $\mathcal{I}$.
At the same time, the left-hand side also approaches one, as $k^2$ approximates the endpoints of $\mathcal{I}$.

For starters, we assume that $\gamma_1$ and $\gamma_2$ have different signs and recall that $f$ preserves its sign on $\mathcal{I}$. Hence the range of the left-hand side on $\mathcal{I}$ includes all values from $-\infty$ to one. Moreover, having the explicit expression for $f$ one can show
that if $f$ is negative on $\mathcal{I}$, then the left-hand side of \eref{2farImp} is monotonously increasing function on $\mathcal{I}$, possibly except a small left neighborhood of its right endpoint. Similarly, for positive $f$ the left-hand side monotonously decreases on $\mathcal{I}$, possibly except a small right neighborhood of its left endpoint. In both cases the equation $(f(k)/\gamma_1-1)(f(k)/\gamma_2-1)=0$ admits a unique solution on $\mathcal{I}$. As a result, equation~\eref{2farImp} gives a unique solution in every such an interval, thus producing an eigenvalue in every spectral gap of the operator $-\Delta_\alpha$.

Next we address the situation when $\gamma_1$ and $\gamma_2$ are of the same sign and distinguish two cases depending on the sign of $f$. Suppose first that the sign of the function $f$ on $\mathcal{I}$ is opposite to those of $\gamma_j$. Then the left-hand side of \eref{2farImp} is a monotonous function on $\mathcal{I}$ and its range there is $(1,\infty)$, hence \eref{2farImp} admits no solution on $\mathcal{I}$. Let us turn to the second case, when $f$ and $\gamma_j$ are of the same sign. Then, obviously, the equation $(f(k)/\gamma_1-1)(f(k)/\gamma_2-1)=0$ has two solutions on $\mathcal{I}$, say, $x_1$ and $x_2$ (in the case $\gamma_1=\gamma_2$ we have $x_1=x_2$). In addition, depending on the sign of $f$ the left-hand side of \eref{2farImp} monotonously decreases from one (from $\infty$) to its local minimum, which is negative (or to zero if $\gamma_1=\gamma_2$) and attained at some point between $x_1$ and $x_2$, and then monotonously decreases from its local minimum to $\infty$ (respectively, to one). In this case equation~\eref{2farImp} admits two solutions on $\mathcal{I}$. If $\gamma_1=\gamma_2$, then the mentioned solutions are close to each other. Indeed, equation \eref{2farImp} leads to the relations
\[
\gamma_1
	=
	\frac{f(k)}{1\pm|\lambda(k)|^{n+1}}
\]
and from the explicit formulas for $f$ and $\lambda$ it follows that, as $n\to\infty$, both functions at the right-hand side are $\OO(\mathrm{exp}\{-C_1(n+1)\})$-close to each other for some positive $C_1$. At the same time, one can see that the functions are not too sloping, hence that both  solutions are $\OO(\mathrm{exp}\{-C_2(n+1)\})$-close with possibly different constant $C_2>0$. Summarizing the discussion, we have arrived at the following conclusions.

\begin{thm}[Magnetic case]
Suppose that $A\notin\mathbb{Z}$. For any $n\in\mathbb{N}$ the essential spectrum of $-\Delta_{\alpha,n}$ coincides with that of $-\Delta_\alpha$.
Assume that $\gamma_1\gamma_2<0$, then any for sufficiently large $n$, the operator $-\Delta_{\alpha,n}$ has precisely one simple impurity state in every gap of its essential spectrum. If $\gamma_1$ and $\gamma_2$ are positive (negative), then for sufficiently large $n$, $\:-\Delta_{\alpha,n}$ has two simple impurity states in every even (respectively, odd) gap of its essential spectrum and no impurity state in every odd (respectively, even) one (provided we start counting from the first gap). If $\gamma_1=\gamma_2$, then impurity states in every even or odd gap are exponentially close to each other with respect to $n$.
\end{thm}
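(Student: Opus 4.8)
\textsl{Proof proposal.}\quad The essential-spectrum claim is immediate: changing the coupling constants at two vertices alters the resolvent of $-\Delta_\alpha$ by a rank-two operator, so $\sigma_{\mathrm{ess}}(-\Delta_{\alpha,n})=\sigma_{\mathrm{ess}}(-\Delta_\alpha)$ for every $n$, as noted at the start of Sec.~\ref{sec:mcoupl}; and since for the periodic operator both the a.c.\ bands and the eigenvalues $k^2=n^2$ are of infinite multiplicity, $\sigma_{\mathrm{ess}}(-\Delta_\alpha)=\sigma(-\Delta_\alpha)$, so the gaps in question are precisely the gaps of $\sigma(-\Delta_\alpha)$. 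All remaining work is confined to those gaps.

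I would then fix $k^2\in\mathbb{R}\setminus\sigma(-\Delta_\alpha)$, so that $N(k)$ has eigenvalues $\lambda(k)^{\pm1}$ with $|\lambda(k)|<1$ and eigenvectors $u_{1,2}(k)$ as in \eref{eq:eigenvalues}, \eref{eigenFunc}, and follow the proof of Proposition~\ref{pro:EVCondition}: an $\ell_2$ solution of the recurrence \eref{discr_relat} must reduce, below the impurities, to a multiple of the unstable eigenvector and, above them, to a multiple of the stable one, so by Theorem~\ref{thm:Duality} the existence of an eigenfunction of $-\Delta_{\alpha,n}$ at $k^2$ is equivalent to the vanishing of a $2\times2$ determinant $\det\{\mathcal{N}_n(k)u_2(k),u_1(k)\}=0$ (or its $u_1\leftrightarrow u_2$ analogue, according to $\sgn\xi(k)$), with $\mathcal{N}_n(k)=N_2(k)N(k)^nN_1(k)$. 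Using $N(k)^nu_j=\lambda_j(k)^nu_j$ together with $N_i(k)u_j(k)=c_{1j}(k,\gamma_i)\lambda_1(k)u_1(k)+c_{2j}(k,\gamma_i)\lambda_2(k)u_2(k)$ and the entries $c_{ij}$ from \eref{matr:C}, expanding the determinant collapses everything to the single scalar equation \eref{2farImp},
\[
\Big(\frac{f(k)}{\gamma_1}-1\Big)\Big(\frac{f(k)}{\gamma_2}-1\Big)=\lambda(k)^{2n+2}\,,
\]
with $f$ the function on the right-hand side of \eref{m=1}. Since an eigenfunction built from a solution is determined up to one overall scalar, every root of \eref{2farImp} yields a \emph{simple} eigenvalue, and it remains to count the roots of \eref{2farImp} on each gap $\mathcal I$.

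The heart of the proof is this root count, for which I would record: (i) $f$ is strictly monotone in $k^2$ on $\mathcal I$ and keeps a definite sign there — negative on the odd gaps, positive on the even ones, counting from the first — vanishing at the band-edge endpoint of $\mathcal I$ and diverging to $\pm\infty$ at the other endpoint (a point $k\in\mathbb{N}$, or $k^2\to-\infty$ in the first gap); (ii) at that diverging endpoint $|\lambda(k)|$ stays bounded away from $1$, whereas $|\lambda(k)|\to1$ only as $k^2$ approaches a band edge, so the right-hand side of \eref{2farImp} lies in $(0,1]$, tends to $1$ at band edges, and is $\OO(\rho^{2n+2})$ with some $\rho\in(0,1)$ uniformly on compact subsets of $\mathcal I$ away from the band edges; (iii) viewed as a function of $f$, the left-hand side equals $(f-\gamma_1)(f-\gamma_2)/(\gamma_1\gamma_2)$, is $1$ at $f=0$, and diverges like $\sgn(\gamma_1\gamma_2)\cdot\infty$ as $|f|\to\infty$. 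Composing (iii) with the monotone $f$ and comparing with (ii): if $\gamma_1\gamma_2<0$ the left-hand side sweeps monotonically (apart from a one-sided band-edge neighbourhood, handled by (ii)) from $1$ through $0$ down to $-\infty$ and meets the small positive right-hand side exactly once — one impurity state in every gap; if $\gamma_1,\gamma_2$ have the sign opposite to $f$ on $\mathcal I$ the left-hand side stays in $(1,\infty)$ and meets nothing; if $\gamma_1,\gamma_2$ share the sign of $f$ the left-hand side falls from $1$ to a strictly negative minimum (equal to $0$ exactly when $\gamma_1=\gamma_2$) and rises again, meeting the small positive right-hand side exactly twice. Rewriting ``sign of $f$ on $\mathcal I$'' as ``even/odd gap'' gives the stated tally: for $\gamma_1,\gamma_2>0$ (resp.\ $<0$), two states in every even (resp.\ odd) gap and none in the others.

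In the coincident case $\gamma_1=\gamma_2$ equation \eref{2farImp} becomes $(f(k)/\gamma_1-1)^2=|\lambda(k)|^{2n+2}$, i.e.\ $\gamma_1=f(k)\big/\big(1\pm|\lambda(k)|^{n+1}\big)$; on the compact part of the gap carrying the two roots one has $|\lambda|\le\rho<1$, so the two right-hand sides differ by $f(k)\cdot2|\lambda(k)|^{n+1}/\big(1-|\lambda(k)|^{2n+2}\big)=\OO(\rho^{n+1})$, and since $f$ is $C^1$ there with derivative bounded away from zero, the two roots are $\OO(\rho^{n+1})$ apart, which is the exponential closeness asserted. I expect the genuine difficulty to be the near-band-edge part of (ii)–(iii): one must show that the competition between $f\sim c\,(k-k_{\mathrm{edge}})^{1/2}$ and $1-\lambda(k)^{2n+2}\sim c'\,(2n+2)(k-k_{\mathrm{edge}})^{1/2}$ creates no spurious extra crossing there, so that the counts ``exactly one'' and ``exactly two'' hold for all sufficiently large $n$; everything else is the finite-rank-perturbation step and the transfer-matrix algebra already set up in Sections~\ref{sec:duality}--\ref{sec:mcoupl}.
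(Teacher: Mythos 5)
Your proposal is correct and follows essentially the same route as the paper: the same transfer-matrix reduction to the characteristic equation \eref{2farImp} via the coefficients \eref{matr:C}, the same sign-of-$f$ versus sign-of-$\gamma_j$ case analysis with the right-hand side exponentially small inside each gap, and the same relation $\gamma_1=f(k)/(1\pm|\lambda(k)|^{n+1})$ for the exponential closeness when $\gamma_1=\gamma_2$. The band-edge subtlety you flag as the remaining difficulty is handled in the paper at the same heuristic level (monotonicity ``possibly except a small neighbourhood of the endpoint''), so nothing essential is missing relative to the paper's own argument.
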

\begin{thm}[Non-magnetic case]
Suppose that $A\in\mathbb{Z}$. For any $n\in\mathbb{N}$ the essential spectrum of $-\Delta_{\alpha,n}$ coincides with that of $-\Delta_\alpha$.
Assume that $\gamma_1\gamma_2<0$, then for any sufficiently large $n$, the operator $-\Delta_{\alpha,n}$ has precisely one simple impurity state in every gap of its essential spectrum. If $\gamma_1$ and $\gamma_2$ are positive (negative), then for sufficiently large $n$, $\:-\Delta_{\alpha,n}$ has two simple impurity states in every odd (respectively, even) gap of its essential spectrum and no impurity state in every even (respectively, odd) one (provided we start counting from the zeroth gap). If $\gamma_1=\gamma_2$, then mentioned two impurity states are exponentially close to each other with respect to $n$.
\end{thm}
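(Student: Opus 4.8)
\medskip

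\noindent My plan is to treat this statement as a minor variant of the magnetic case just settled, the only new ingredient being the different sign pattern of the function $f$ on the spectral gaps of $-\Delta_\alpha$ when $A\in\mathbb{Z}$. First I would dispose of the essential-spectrum claim: the resolvents of $-\Delta_{\alpha,n}$ and $-\Delta_\alpha$ differ by a finite-rank operator, so $\sigma_\mathrm{ess}(-\Delta_{\alpha,n})=\sigma_\mathrm{ess}(-\Delta_\alpha)$ by Weyl's theorem, cf.~\cite{Wei80}. Next I would point out that the derivation producing the characteristic equation \eref{2farImp} nowhere uses $A\notin\mathbb{Z}$, so it applies verbatim here: for $k^2$ in a fixed spectral gap $\mathcal{I}$ of $-\Delta_\alpha$, the value $k^2$ lies in the discrete spectrum of $-\Delta_{\alpha,n}$ if and only if
\[
(f(k)/\gamma_1-1)(f(k)/\gamma_2-1)=(\lambda(k))^{2n+2}\,,
\]
where, as before, $|\lambda(k)|<1$ in the interior of $\mathcal{I}$ while $|\lambda(k)|\to 1$ at the band-edge endpoint of $\mathcal{I}$, and $(\lambda(k))^{2n+2}\to 0$ uniformly on compact subsets of $\mathcal{I}$ as $n\to\infty$.

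The plan is then to rerun the gap-by-gap analysis of \eref{2farImp}, replacing the magnetic sign data for $f$ by the non-magnetic one established at the end of Sec.~\ref{sec:period} and already used in the single-impurity example: for $A\in\mathbb{Z}$ the function $f$ increases monotonically from $-\infty$ to $0$ across every gap when $\alpha>0$, and it increases from $-\infty$ to $0$ across the lowest gap but from $0$ to $\infty$ across each remaining gap when $\alpha<0$. In either case $f$ keeps a constant sign on each $\mathcal{I}$ and vanishes at exactly one of its endpoints, which is all the case analysis needs. Feeding this into the same three-way split as in the magnetic proof, I would argue: (i) if $\gamma_1$ and $\gamma_2$ have opposite signs, then on each $\mathcal{I}$ the factor carrying the $\gamma$ of the sign opposite to that of $f$ stays bounded away from zero while the other is monotone and vanishes once, so the left-hand side of \eref{2farImp} runs --- possibly apart from a small one-sided neighborhood of the endpoint at which $f$ blows up --- monotonically from $\pm\infty$ through $0$ to a value near $1$; since the right-hand side lies below $1$ and tends to $0$ in the interior for large $n$, the intermediate value theorem then yields exactly one root, i.e.\ one simple impurity state in every gap; (ii) if $\gamma_1,\gamma_2$ are both of the sign opposite to that of $f$ on $\mathcal{I}$, both factors stay below $-1$, so the left-hand side exceeds $1$ on all of $\mathcal{I}$ while the right-hand side stays below $1$, giving no root; (iii) if $\gamma_1,\gamma_2$ are both of the same sign as $f$ on $\mathcal{I}$, each factor vanishes once, the left-hand side attains a negative local minimum between its two zeros (the minimum being $0$ when $\gamma_1=\gamma_2$) and tends to $1$ at the band-edge endpoint, so comparing with the right-hand side --- which also tends to $1$ there but, for $n$ large, drops off far more steeply --- gives exactly two roots once $n$ is large. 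Collecting (ii) and (iii) over all gaps, with the $\alpha$-dependent list of signs of $f$ above, then produces the asserted distribution of impurity states.

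For the exponential-closeness claim when $\gamma_1=\gamma_2=:\gamma$ I would argue exactly as in the magnetic case: equation~\eref{2farImp} reduces to $\gamma=f(k)/(1\pm|\lambda(k)|^{n+1})$; on the compact part of $\mathcal{I}$ carrying the two roots, $|\lambda|$ is bounded away from $1$, so the two right-hand sides differ by $\OO(\mathrm{e}^{-C_1(n+1)})$ for some $C_1>0$, while the explicit formulas for $f$ and $\lambda$ show that their common slope in $k$ there is bounded away from zero; hence the two roots are $\OO(\mathrm{e}^{-C_2(n+1)})$-close for some $C_2>0$.

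The step I expect to be the real obstacle is the endpoint analysis underlying (i) and (iii): one has to show that the left-hand side of \eref{2farImp} is genuinely monotone on $\mathcal{I}$ away from a controlled one-sided neighborhood of the band-edge endpoint, and then, using the square-root-type expansions of $\xi$, $f$ and $\lambda$ near that endpoint together with the $n$-dependence of the exponent, establish that the right-hand side overtakes the left one there for all sufficiently large $n$ --- so that the count of roots comes out \emph{exactly} as stated and uniformly in $n$, not merely as a lower bound. This is a somewhat tedious but routine sharpening of the estimates already carried out for the $m=1$ and $m=2$ examples; once it is in place, the rest is bookkeeping over the gaps according to the sign of $f$.
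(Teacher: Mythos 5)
Your proposal is correct and follows essentially the same route as the paper: the paper proves the magnetic and non-magnetic theorems by one and the same analysis of the characteristic equation \eref{2farImp} (which indeed never uses $A\notin\mathbb{Z}$), split into the three cases according to the signs of $\gamma_1,\gamma_2$ relative to $f$ on each gap, with the non-magnetic gap bookkeeping supplied by the sign data of $f$ for $A\in\mathbb{Z}$ exactly as you describe, and with the same $\gamma_1=f(k)/(1\pm|\lambda(k)|^{n+1})$ argument for exponential closeness. The monotonicity and band-edge estimates you flag as the remaining obstacle are treated in the paper at the same level of detail as in your sketch (asserted from the explicit formulas for $f$ and $\lambda$), so nothing essential is missing relative to the published argument.
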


\subsection*{Acknowledgments}

The research was supported by the Czech Science Foundation (GA\v{C}R) within the project 14-06818S and by the European Union with the project ``Support for research teams on CTU'' CZ.1.07/2.3.00/30.0034.

\subsection*{References}

\end{document}